\def\ps@headings{%
\def\@oddhead{\mbox{}\scriptsize\rightmark \hfil \thepage}%
\def\@evenhead{\scriptsize\thepage \hfil \leftmark\mbox{}}%
\def\@oddfoot{}%
\def\@evenfoot{}}
\makeatother \pagestyle{headings}
\newtheorem{theorem}{Theorem}
\begin{document}

\IEEEoverridecommandlockouts

\newcommand{\ls}[1]
    {\dimen0=\fontdimen6\the\font
     \lineskip=#1\dimen0
     \advance\lineskip.5\fontdimen5\the\font
     \advance\lineskip-\dimen0
     \lineskiplimit=.9\lineskip
     \baselineskip=\lineskip
     \advance\baselineskip\dimen0
     \normallineskip\lineskip
     \normallineskiplimit\lineskiplimit
     \normalbaselineskip\baselineskip
     \ignorespaces
}

\title{Base-Station Selections for QoS Provisioning \\Over
Distributed Multi-User MIMO Links \\in Wireless Networks}
\author{\authorblockN{\vspace{-0pt}Qinghe Du and Xi Zhang}\vspace{-0pt}\\
\authorblockA{Networking and Information Systems Laboratory\\
Department of Electrical and Computer Engineering\\
Texas A\&M University, College Station, TX 77843, USA\\
Email: {\it \{duqinghe@tamu.edu,
xizhang@ece.tamu.edu\}}\vspace*{-0pt}}
\thanks{The research reported in this paper was supported in part by the U.S.
National Science Foundation CAREER Award under Grant ECS-0348694.}}
 \IEEEpeerreviewmaketitle \maketitle

\markboth{TO APPEAR in Proceeding of IEEE INFOCOM 2011}{TO APPEAR in
Proceeding of IEEE INFOCOM 2011}

\begin{abstract}
We propose the QoS-aware BS-selection and the corresponding
resource-allocation schemes for downlink multi-user transmissions
over the distributed multiple-input-multiple-output (MIMO) links,
where multiple location-independent base-stations (BS), controlled
by a central server, cooperatively transmit data to multiple mobile
users. Our proposed schemes aim at minimizing the BS usages and
reducing the interfering range of the distributed MIMO
transmissions, while satisfying diverse statistical delay-QoS
requirements for all users, which are characterized by the
delay-bound violation probability and the effective capacity
technique. Specifically, we propose two BS-usage minimization
frameworks to develop the QoS-aware BS-selection schemes and the
corresponding wireless resource-allocation algorithms across
multiple mobile users. The first framework applies the joint
block-diagonalization (BD) and probabilistic transmission (PT) to
implement multiple access over multiple mobile users, while the
second one employs time-division multiple access (TDMA) approach to
control multiple users' links. We then derive the optimal
BS-selection schemes for these two frameworks, respectively. In
addition, we further discuss the PT-only based BS-selection scheme.
Also conducted is a set of simulation evaluations to comparatively
study the average BS-usage and interfering range of our proposed
schemes and to analyze the impact of QoS constraints on the BS
selections for distributed MIMO transmissions.
\end{abstract}

\vspace{5pt}\begin{keywords} Distributed MIMO, broadband wireless
networks, statistical QoS provisioning, wireless fading
channels.\vspace{-5pt}
\end{keywords}


\section{Introduction}
\label{sect-introduction}

\PARstart{T}{o} increase the coverage of broadband wireless
networks, distributed multiple-input-multiple-output (MIMO)
techniques, where multiple location-independent base stations (BS)
cooperatively transmit data to mobile users, have attracted more and
more research attentions~\cite{A-Sanderovich,P-Shang,R-Mudumbai}. In
particular, the distributed MIMO techniques can effectively organize
multiple location-independent BS's to form the distributed MIMO
links connecting with mobile users
Like the conventional centralized MIMO
system~\cite{E-Telatar,M-Gharavi-Alkhansari,S-Sanayei}, the
distributed MIMO system can significantly enhance the capability of
the broadband wireless networks in terms of the quality-of-service
(QoS) provisioning as compared to the single antenna system.
However, the distributed nature for cooperative multi-BS
transmissions also imposes many new challenges in wide-band wireless
communications, which are not encountered in the centralized MIMO
systems. First, the cooperative distributed transmissions cause the
severe difficulty for synchronization among multiple
location-independent BS transmitters. Second, as the number of
cooperative BS's increases, the computational complexity for MIMO
signal processing and coding also grow rapidly. Third, because the
coordinated BS's are located at different geographical positions,
the cooperative communications in fact enlarge the interfering areas
for the used spectrum, thus drastically degrading the
frequency-reuse efficiency in the spatial domain. Finally, many
wide-band transmissions are sensitive to the delay, and thus we need
to design QoS-aware distributed MIMO techniques, such that the
scarce wireless resources can be more efficiently utilized.


Towards the above issues, many research works on distributed MIMO
transmissions have been proposed recently. The feasibility of
transmit beamforming with efficient synchronization techniques over
distributed MIMO link has been demonstrated through experimental
tests~\cite{R-Mudumbai}, suggesting that complicated MIMO signal
processing techniques are promising to implement in realistic
systems. For the centralized MIMO system, the antenna
selection~\cite{S-Sanayei,M-Gharavi-Alkhansari} is an effective
technique to reduce the complexity, which clearly can be also
extended to distributed MIMO systems for the BS selection. It can be
expected that the BS-selection techniques can significantly decrease
the processing complexity, while still achieving high throughput
gain over the single BS transmission. Also, it is desirable to
minimize the number of selected BS's through BS-selection
techniques, which can effectively decrease the interfering range and
thus improve the frequency-reuse efficiency of the entire wireless
network. Most previous research works for BS/antennas selections
mainly focused on the scenarios of selecting a subset of
BS's/antennas with the fixed
cardinality~\cite{S-Sanayei,M-Gharavi-Alkhansari,P-Shang}. However,
it is evident that based on the wireless-channel status, BS-subset
selections with dynamically adjusted cardinality can further
decrease the BS usage. More importantly, how to efficiently support
diverse delay-QoS requirements through BS-selection in distributed
MIMO systems sill remains a widely cited open problem.

To overcome the aforementioned problems, we propose the QoS-aware
BS-selection schemes for the distributed wireless MIMO links, which
aim at minimizing the BS usages and reducing the interfering range,
while satisfying diverse statistical delay-QoS constraints. In
particular, we develop two BS-usage minimization frameworks for
distributed multi-suer MIMO transmissions. The first framework uses
the joint block-diagonalization (BD) and probabilistic transmission
(PT) for multiple access of multi-user over distributed MIMO links,
while the second framework employs time-division multiple access
(TDMA) techniques. We derive the optimal QoS-aware BS-selection and
the corresponding resource allocation schemes for these two
frameworks, respectively. We also discuss the PT-only based
BS-selection scheme. Simulations are conducted for comparative
analyses among the above BS-selection schemes.

%
%


The rest of this paper is organized as follows.
Section~\ref{sect-sysmodel} describes the system model for
distributed MIMO transmissions.
Section~\ref{sect-effective-capacity} introduces the statistical QoS
guarantees and the concept of effective capacity.
Section~\ref{sect-multiple-BD} develops the joint BD and PT (BD-PT)
optimization frameworks for QoS-aware BS-selections over multi-user
distributed MIMO links and derives the corresponding optimal
solution. Section~\ref{sect-multiple-TDMA} derives TDMA-based
QoS-aware BS-selection scheme. Section~\ref{sect-simulations}
conducts simulations to perform comparative analyses for our
proposed schemes. The paper concludes with
Section~\ref{sect-conclusion}.

\emph{Notations:} The operator $|\cdot|$ used on a real or complex
number generates the absolute value; the operator $|\cdot|$ used for
a set represents the cardinality of this set. We use boldface to
denote matrices and vectors. For an $X\times Y$ matrix $\mathbf{A}$,
we denote by $\mathbf{A}(i,j)$ the element on the $i$th row and
$j$th column; $\|\mathbf{A}\|_F$ denotes the Frobenius norm of
$\mathbf{A}$, where $\|\mathbf{A}\|_F\triangleq
\sqrt{\sum_{i=1}^X\sum_{j=1}^Y|\mathbf{A}(i,j)|^2}$. The operators
$(\cdot)^{\tau}$ and $(\cdot)^{\dag}$ generate the transpose and
conjugate transpose, respectively. The operator $1_{(\cdot)}$ is the
indication function. If the statement in the subscript is true, we
have $1_{(\cdot)}=1$; otherwise, $1_{(\cdot)}=0$.

\section{System Model}
\label{sect-sysmodel}

\subsection{System Architecture}
\label{sect-sysmodel-architecture}

We concentrate on the wireless \emph{distributed MIMO} system for
downlink transmissions depicted in Fig.~\ref{fig-sysmodel}, which
consists of $K_{\mathrm{bs}}$ distributed BS's, $K_{\mathrm{mu}}$
mobile users, and one central server. The $m$th BS has $M_m$
transmit antennas for $m=1,2,\ldots,K_{\mathrm{bs}}$ and the $n$th
mobile user has $N_n$ receive antennas for
$n=1,2,\ldots,K_{\mathrm{mu}}$. All distributed BS's are connected
to the central server through high-speed optical connections. The
data to be delivered to the $n$th mobile user,
$n=1,2,\ldots,K_{\mathrm{mu}}$, arrives at the central server with a
constant rate, which is denoted by $\overline{C}_n$. Then, the
central server dynamically controls these distributed BS's to
cooperatively transmit data to the corresponding mobile users under
the specified delay-QoS requirements.

\begin{figure}[t]
\vspace{1pt}\centerline{\includegraphics[width=2.8in]{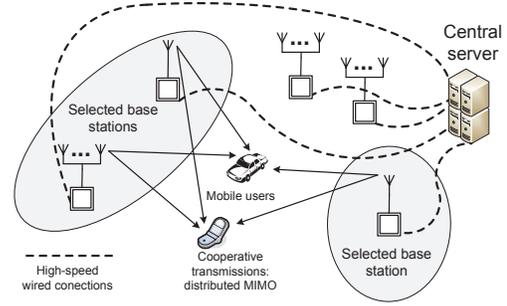}}
\vspace{-5pt}\caption{System model for wireless downlink distributed
MIMO transmissions.} \vspace{-15pt}\label{fig-sysmodel}
\end{figure}

For multi-user downlink transmissions, the distributed BS's and the
mobile users form the broadcast MIMO link for data transmissions.
The wireless fading channels between the $m$th BS and the $n$th
mobile user is modeled by an $N_n\times M_m$ matrix
$\mathbf{H}_{n,m}$, where $\mathbf{H}_{n,m}(i,j)$ is the complex
channel gain between the $i$th receive antenna of $n$th mobile user
and the $j$th transmit antenna of the $m$th BS. All elements of
$\mathbf{H}_{n,m}$ are independent and circularly symmetric complex
Gaussian random variables with zero mean and the variance equal to
$\overline{h}_{n,m}$, implying that $\mathbf{H}$ has continuous
cumulative distribution function (CDF). Also, the instantaneous
aggregate power gain of the MIMO link between the $n$th mobile user
and the $m$th BS, denoted by $\gamma_{n,m}$, is defined by
\begin{eqnarray}
\gamma_{n,m} \triangleq \frac{1}{M_m}
\left\|\mathbf{H}_{n,m}\right\|_F^2
\label{eq-aggregate-power-gain}
\end{eqnarray}
Since the Frobenius norm of the channel matrix can effectively
characterize the channel quality in terms of achieving high
throughput~\cite{S-Sanayei}, the aggregate power gain given in
Eq.~(\ref{eq-aggregate-power-gain}) will play an important role in
our BS selection design. We further define $\mathbf{H}_n\triangleq
[\mathbf{H}_{n,1}~\mathbf{H}_{n,2}~\cdots~\mathbf{H}_{n,K_{\mathrm{bs}}}]$
as the CSI for the $n$th mobile user for
$n=1,2,\ldots,K_{\mathrm{mu}}$. The matrix $\mathbf{H}_n$ follows
the independent block-fading model, where $\mathbf{H}_n$ does not
change within a time period with the fixed length $T$, called a time
frame, but varies independently from one frame to the other frame.
Furthermore, we define $\mathbf{H}\triangleq
[\mathbf{H}_1^{\tau}~\mathbf{H}_2^{\tau}~\cdots~\mathbf{H}_{K_{\mathrm{bs}}}^{\tau}]^{\tau}$,
representing a fading state of the entire distributed MIMO system.
%

In order to decrease the complexity and suppress the interfering
range of the distributed MIMO transmission, the cental server
dynamically selects a subset of BS's to construct the distributed
MIMO link. Then, our design target is to minimize the average number
of needed BS's subject to the specified QoS constraints. We suppose
that each mobile user can perfectly estimate its CSI at the
beginning of every time frame and reliably feed CSI back to the
central server through dedicated control channels. Based on CSI
$\mathbf{H}$ and QoS requirements, the central server then
adaptively selects the subset of BS's and organizes them to transmit
data to mobile users through the distributed MIMO links.

\subsection{The Delay QoS Requirements}

The central data server maintains a queue for the incoming traffic
to each mobile user. We mainly focus on the queueing delay in this
paper because the wireless channel is the major bottleneck for
high-rate wireless transmissions. Since it is usually unrealistic to
guarantee the hard delay bound over the highly time-varying wireless
channels, we employ the statistical metric, namely, the
\emph{delay-bound violation probability}, to characterize the
diverse delay QoS requirements. Specifically, for the $n$th mobile
user, the probability of violating a specified delay bound, denoted
by $D_{\mathrm{th}}^{(n)}$, cannot exceed a given threshold $\xi_n$.
That is, the inequality
\begin{eqnarray}
\mbox{Pr}\left\{D_n>D_{\mathrm{th}}^{(n)}\right\}\leq \xi_n,\quad
n=1,2,\ldots,N_{\mathrm{mu}}, \label{eq-statistical-delay-qos}
\end{eqnarray}
needs to hold, where $D_n$ denotes the queueing delay in the $n$th
mobile user's queueing system.

\subsection{Performance Metrics and Design Objective}
\label{sect-sysmodel-design-objective}

We denote by $L$ the cardinality of the selected BS subset (the
number of selected BS's) for the distributed MIMO transmission in a
fading state. Then, we denote the expectation of $L$ by
$\overline{L}$ and call it the \emph{average BS usage}. As mentioned
in Section~\ref{sect-sysmodel-architecture}, our major objective is
to minimize $\overline{L}$ through dynamic BS selection while
guaranteeing the delay QoS constraint specified by
Eq.~(\ref{eq-statistical-delay-qos}). We will also evaluate the
\emph{average interfering range} affected by the distributed MIMO
transmission. The instantaneous interfering range, denoted by $A$,
is defined as the area of the region where the average received
power under the current MIMO transmission is larger than then
certain threshold denoted by $\sigma^2_{\mathrm{th}}$. The average
interfering area is then defined as the expectation
$\mathbb{E}_{\mathbf{H}}\{A\}$ over all $\mathbf{H}$. Clearly,
minimizing $\overline{L}$ can not only reduce implementation
complexity, but also decrease the average interfering range affected
by the transmit power.

%

\subsection{The Power Control Strategy}
\label{sect-sysmodel-power}

The transmit power of our distributed MIMO system varies with the
number of selected BS's. In particular, given the number $L$ of
selected BS's, the total instantaneous transmitted power used for
distributed MIMO transmissions is set as a constant equal to
$\mathcal{P}_L$. Furthermore, $\mathcal{P}_L$ linearly increases
with $L$ by using the strategy as follows:
\begin{eqnarray}
\mathcal{P}_L = \mathcal{P}_{\mathrm{ref}} + \kappa (L-1) ,\quad
L=1,2,\ldots, K_{\mathrm{bs}}, \label{eq-power-linear-increase}
\end{eqnarray}
where $\mathcal{P}_{\mathrm{ref}}>0$ is called the \emph{reference
power} and $\kappa\geq0$ describes the power increasing rate against
$L$. Also, we define $\mathcal{P}_L\triangleq0$ for $L=0$. The above
power adaptation strategy is simple to implement, while the average
transmit power can be effectively decreased through minimizing the
average number of used BS's. In addition,
Eq.~(\ref{eq-power-linear-increase}) can upper-bound the
instantaneous interferences and the interfering range over the
entire network.
%
%

\section{Effective Capacity Approach for Statistical
Delay-QoS Guarantees} \label{sect-effective-capacity}

In this paper, we apply the effective capacity
approach~\cite{D-Wu,J-Tang,X-Zhang,J-Tang-twc-jun-2008} to integrate
the constraint on delay-bound violation probability given by
Eq.~(\ref{eq-statistical-delay-qos}) into our BS selection design.
Consider a stable discrete-time queueing system with the stationary
time-varying arrival-rate and departure-rate (service-rate)
processes. The asymptotic analyses based on the large deviation
principal~\cite{D-Wu,C-S-Chang-book} show that under the sufficient
conditions, the probability that the queue-length, denoted by $Q$,
exceeding a given bound $Q_{\mathrm{th}}$ can be approximated by
\begin{eqnarray}
\mbox{Pr}\{Q>Q_{\mathrm{th}}\}  \approx e^{-\theta Q_{\mathrm{th}}},
\label{eq-prob-queue-vio}
\end{eqnarray}
where $\theta>0$ is a constant called \emph{QoS exponent}. It is
clear that the larger (smaller) $\theta$ implies the lower (higher)
queue-length-bound violation probability.

By using $\theta$, the delay-bound violation probability can be
approximated~\cite{D-Wu} by
\begin{eqnarray}
\mbox{Pr}\{D>D_{\mathrm{th}}\}  \approx e^{-\theta \overline{C}
D_{\mathrm{th}}}. \label{eq-prob-delay-vio}
\end{eqnarray}
for the constant rate $\overline{C}$. When the arrival rate is not
time-varying, the approximation in Eq.~(\ref{eq-prob-delay-vio})
needs to replace $\overline{C}$ with effective
bandwidth~\cite{D-Wu,C-S-Chang-book} function of the arrival rate
process, which is defined as the minimum constant service rate
required to guarantee QoS exponent $\theta$.


Then, to upper-bound $\mbox{Pr}\{D>D_{\mathrm{th}}\}$ with a
threshold $\xi$, using Eq.~(\ref{eq-prob-delay-vio}), we get the
minimum required QoS exponent $\theta$ as follows:
\begin{eqnarray}
\theta = -\frac{\log(\xi)}{\overline{C}D_{\mathrm{th}}}.
\label{eq-desired-theta}
\end{eqnarray}
Consider a discrete-time arrival process with constant rate
$\overline{C}$ and a discrete-time time-varying stationary departure
process, denoted by $R[k]$, where $k$ is the time index. In order to
guarantee the desired $\theta$ determined by
Eq.~(\ref{eq-desired-theta}), the statistical QoS
theory~\cite{C-S-Chang-book,D-Wu} shows that the \emph{effective
capacity} $\mathcal{C}(\theta)$ of the service-rate process $R[k]$
needs to satisfy
\begin{eqnarray}
\mathcal{C}(\theta)=\overline{C},\label{eq-desired-effcp}
\end{eqnarray}
given the QoS exponent $\theta$. The \emph{effective capacity}
function is defined in~\cite{D-Wu} as the maximum constant arrival
rate which can be supported by the service rate to guarantee the
specified QoS exponent $\theta$. If the service-rate sequence $R[k]$
is time uncorrelated, the effective capacity can be
written~\cite{J-Tang} as
\begin{eqnarray}
\mathcal{C}(\theta) \triangleq
-\frac{1}{\theta}\log\left(\mathbb{E}\left\{e^{-\theta
R[k]}\right\}\right),\label{eq-effective-capacity}
\end{eqnarray}
where $\mathbb{E}\{\cdot\}$ denotes the expectation.

In our distributed MIMO system, the BS selection result is designed
as the function determined by the current CSI. Thus, the
corresponding transmission rate (service rate) is time independent
under the independent block-fading model (see
Section~\ref{sect-sysmodel-architecture}). Then, applying
Eqs.~(\ref{eq-desired-theta})-(\ref{eq-desired-effcp}), the delay
QoS constraints given by Eq.~(\ref{eq-statistical-delay-qos}) can be
equivalently converted to:
\begin{eqnarray}
\mathbb{E}_{\mathbf{H}}\left\{e^{-\theta_n R_n}
-e^{-\theta_n \overline{C}_n}\right\}\leq 0, \quad,
n=1,2,\ldots,N_{\mathrm{mu}}, \label{eq-statistical-delay-qos-1}
\end{eqnarray}
where
$\theta_n=-\log(\xi_n)\!\!\left/\!\big(\overline{C}_nD_{\mathrm{th}}^{(n)}\big)\right.$
and $\mathbb{E}_{\mathbf{H}}\{\cdot\}$ denotes the expectation over
all $\mathbf{H}$.


\section{Joint Block-Diagonalization and Probabilistic Transmission Based Base-Station Selection}
\label{sect-multiple-BD}

As discussed in Section~\ref{sect-sysmodel}, based on CSI
$\mathbf{H}$ and QoS requirements, the central server will
adaptively select the subset of BS's and organizes them to transmit
data to mobile users through the distributed MIMO links. Given the
cardinality $L$ of the desired BS subset in a fading state, we
denote by $\Omega_L$ the set of indices of selected BS's, where
$\Omega_L=\{i_{L,1},i_{L,2},\ldots,i_{L,L}\}$ and
$i_{L,\ell}\in\{1,2,\ldots,K_{\mathrm{bs}}\}$ for
$\ell=1,2,\ldots,L$. Note that once a BS is selected, we use all its
transmit antennas for data transmissions. For the specified $L$, we
use $\mathcal{U}_L=\{n_{U,1},n_{U,2},\ldots,n_{U,U}\}$ to denote the
set of \emph{active} users, picked by the central server, which can
receive the data in this fading state, where $U$ is the cardinality
of $\mathcal{U}_L$. For presentation convenience, we use
$\mathfrak{M}_L\triangleq(\Omega_L,\mathcal{U}_L)$ to represent a
specific \emph{transmission mode} (or mode in short). Moreover, we
term the pairs with $U\geq2$ for $\mathcal{U}_L$ as
\emph{multi-user} modes, and term the pairs with $U=1$ as
\emph{single-user} modes.

Given $L$, $\Omega_L$ and $\mathcal{U}_L$, the channel matrix of the
$n$th mobile user for $n\in\mathcal{U}_L$, modeled by
$\mathbf{H}_{\Omega_L}^{(n)}$, is determined by
\begin{eqnarray}
\mathbf{H}_{\Omega_L}^{(n)} \triangleq \left[\mathbf{H}_{n,i_{L,1}}~
\mathbf{H}_{n,i_{L,2}}~\cdots~ \mathbf{H}_{n,i_{L,L}} \right],
\end{eqnarray}
where $\mathbf{H}_{\Omega_L}^{(n)}$ is an
$N_n\times\big(\sum_{\ell\in\Omega_L} M_{i_{L,\ell}}\big)$ matrix.
Furthermore, we use
$\overline{\boldsymbol{\Upsilon}}_{\Omega_L}^{(n)}$ to denote the
power gain matrix for $\mathbf{H}_{\Omega_L}^{(n)}$ under the given
$\Omega_L$, where
\begin{eqnarray}
\overline{\boldsymbol{\Upsilon}}_{\Omega_L}^{(n)}(i,j)=\mathbb{E}_{\mathbf{H}}\left\{\left.\left|\mathbf{H}_{\Omega_L}^{(n)}(i,j)\right|^2\right|
\mbox{fixing}
~\Omega_L\right\}.\label{eq-multiple-BD-csi-avg-power-gain}
\end{eqnarray}
The physical-layer signal transmissions can be modeled by
\begin{eqnarray}
\begin{array}{l}
\mathbf{y}_{\mathfrak{M}_L}^{(n)} = \mathbf{H}_{\Omega_L}^{(n)}
\sum_{i\in\mathcal{U}_L} \mathbf{s}_{\mathfrak{M}_L}^{(i)} +
\boldsymbol{\varsigma}^{(n)}, \quad n\in\mathcal{U}_L,
\end{array}
\nonumber
\end{eqnarray}
where $\mathbf{s}_{\mathfrak{M}_L}^{(i)}$ represents
the $i$th user's input signal vector for the MIMO channel
$\mathbf{H}_{\Omega_L}^{(i)}$, $\mathbf{y}_{\mathfrak{M}_L}^{(n)}$
is the signal vector received by the $n$th user, and
$\boldsymbol{\varsigma}^{(n)}$ is the complex additive white
Gaussian noise (AWGN) vector with unit power for each element of
this vector. In this section, we employ the
\emph{block-diagonalization} (BD) technique~\cite{Q-H-Spencer} to
implement multiple access for multi-user modes in our QoS-aware
BS-selection framework.

For dynamic BS selections in distributed MIMO transmissions, $L$ and
$\mathfrak{M}_L$ are both functions of CSI and QoS requirements.
Then, we need to answer the following questions: (i)~Given $L$,
which transmission mode will be used for single-user and multi-user
modes, respectively? (ii)~When do we use single-user or multi-user
modes? (iii)~For a specific multi-user mode, how do we
quantitatively allocate the wireless resources across multiple
mobile users under the BD based transmissions? (iv)~Which $L$ will
be selected for distributed MIMO transmissions in each fading state
to decrease the average BS-usage while satisfying the QoS
requirements?

Clearly, we can not examine all combinations of
$(\Omega_L,\mathcal{U}_L)$ to minimize the BS usage due to the too
high complexity. Then, in Section~\ref{sect-multiple-BD-multi-mode},
we develop the heuristic algorithms to efficiently select
$\mathfrak{M}_L$ for the specified $L$ in multi-user transmission
modes. In Section~\ref{sect-multiple-BD-single-mode}, we determine
how to select $\mathfrak{M}_L$ in single-user transmission mode.
Based on schemes developed in
Sections~\ref{sect-multiple-BD-multi-mode} and
\ref{sect-multiple-BD-single-mode}, we further answer
questions~(iii) and~(iv) through formulating and solving the joint
BD-PT based
 BS-usage minimization problem in
Sections~\ref{sect-multiple-BD-framework-solution}
and~\ref{sect-multiple-BD-solution}.

\subsection{Selection of $\mathfrak{M}_L$ in Multi-User
Transmission Modes} \label{sect-multiple-BD-multi-mode}

In each fading state, we pick $K_{\mathrm{bs}}$ multi-user
transmission modes as candidates for distributed MIMO transmissions.
These $K_{\mathrm{bs}}$ transmission modes corresponds to
$L=1,2,\ldots,K_{\mathrm{bs}}$, respectively, representing different
levels of BS usages. As mentioned previously, the derivation of
global optimal selection strategy in terms of minimizing the average
BS usage is intractable, since the complexity of examining all
possible $\mathfrak{M}_L=(\Omega_L,\mathcal{U}_L)$ is too high.
Therefore, for a given $L$, we determine $\mathfrak{M}_L$ through a
two-step method. We first propose the priority BS-selection to
determine the BS subset $\Omega_L$. Then, based on the selected
$\Omega_l$, we derive $\mathcal{U}_L$ through a joint
channel-priority user-selection process. \vspace{3pt}

\noindent\emph{A.1.~Priority BS-Selection to Determine $\Omega_L$}


Consider any fading state $\mathbf{H}$. The $n$th user's global
maximum achievable transmission rate is attained when all BS's are
used and all the other users do not transmit. In this case, we have
$L=K_{\mathrm{bs}}$ and
$\mathbf{H}_{\Omega_L}^{(n)}\!\!\!=\mathbf{H}_n$. Moreover, all BS's
and the $n$th user builds a single-user MIMO channel $\mathbf{H}_n$.
Then, the maximum achievable rate is equal to the capacity for the
MIMO channel $\mathbf{H}_n$ with power $\mathcal{P}_L$, which is
given by~\cite{E-Telatar}
\begin{eqnarray}
R^{(n)}_{\max}&\!\!\!\!\!=&\!\!\!\!\max_{\boldsymbol{\Xi}^{(n)}:
\mathrm{Tr}\left(\boldsymbol{\Xi}^{(n)}\right)=\mathcal{P}_{K_{\mathrm{bs}}}}
\!\!\!\left\{\!BT\!\log\left[\det\left(\mathbf{I} +
\mathbf{H}_n\boldsymbol{\Xi}^{(n)}\mathbf{H}_n^{\dag}\right)\right]\right\}\nonumber
\end{eqnarray}
where 
$\det(\cdot)$ generates the determinant of a matrix,
$\mathrm{Tr}(\cdot)$ evaluates the trace of a matrix, and
$\boldsymbol{\Xi}^{(n)}$ is the covariance matrix of
$\mathbf{s}_{\mathfrak{M}_L}^{(n)}$. Correspondingly, we get the
maximum achievable effective capacity of the $n$th user, denoted by
$\mathcal{C}_{\max}^{(n)}$, as follows:
\begin{eqnarray}
\mathcal{C}_{\max}^{(n)} =
-\frac{1}{\theta_n}\log\left(\mathbb{E}_{\mathbf{H}}\left\{e^{-\theta_n
R^{(n)}_{\max}}\right\}\right),
\end{eqnarray}
for $n = 1,2,\ldots,K_{\mathrm{mu}}$. Furthermore, we define the
effective-capacity fraction for the $n$th user as the ratio between
the traffic loads and the maximum achievable effective capacity.
Denoting the effective-capacity fraction by $\widehat{C}_n$, we
define
$\widehat{C}_n\triangleq\overline{C}_n/\mathcal{C}_{\max}^{(n)}$.
Note that $\widehat{C}_n$ can be readily obtained off-line based on
the statistical information of wireless channels, and thus can be
used to design the BS selection algorithm during the data
transmission process. For presentation convenience, we sort
$\{\widehat{C}_n\}_{n=1}^{K_{\mathrm{mu}}}$ in the decreasing order
and denote the permuted version by
$\{\widehat{C}_{\pi(j)}\}_{h=1}^{K_{\mathrm{mu}}}$, where
$\widehat{C}_{\pi(1)}\geq \widehat{C}_{\pi(2)}\geq\cdots\geq
\widehat{C}_{\pi(K_{\mathrm{mu}})}$ indicates the order from the
higher priority to the lower priority. In the rest of this paper, we
use the term of user $\pi(i)$ to denote the user associated with the
$i$th largest effective-capacity fraction.

Clearly, for a higher $\widehat{C}_n$, the $n$th user needs more
wireless resources to meet its QoS requirements. Thus, in order to
satisfy the QoS requirements for all users, we assign higher
BS-selection priority to the user with larger $\widehat{C}_n$.
Following this principle, we design the \emph{priority BS-selection}
algorithm to determine $\Omega_L$ in each fading state and provide
the pseudo code in Fig.~\ref{fig-priority-selection}. In the pseudo
code given by Fig.~\ref{fig-priority-selection}, we use temporary
variables $\overline{\Psi}$ and $\Psi$ to denote the subsets of BS's
which have been selected and which have not been selected,
respectively.

\begin{figure}
\footnotesize \vspace{5pt} \centerline{
\begin{tabular}{p{8.3cm}}
\hline\\
\end{tabular}
}\vspace{-6pt} \centerline{
\begin{tabular}{p{0.05cm} p{7.95cm}}
01. & Let $\Psi :=\{1,2,\ldots,K_{\mathrm{bs}}\}$,
$\overline{\Psi}:= \varnothing$, and $\ell = |\overline{\Psi}|$; ! Initialization \\
02. & $j:=1$. \,\quad\quad\quad\quad! Start selection with User
$\pi(1)$
\\
03. & WHILE $(\ell<L)$ \quad ! Iterative selections until $L$ BS's
are selected
\\
04. & ~~$m^* = \arg\min_{m\in \Psi}\{\gamma_{\pi(j),m}\}$. \\
& ~~~~! User $\pi(j)$ selects the BS with the largest aggregate
power gain.
\\
05. & ~~$\overline{\Psi} := \overline{\Psi} \cup \{m^*\}$,
$\Psi:=\Psi - \{m^*\}$, and $\ell:=\ell+1$.\\
& ~~~~! Update $\overline{\Psi}$, $\Psi$, and $\ell$.
\\
06. & ~~IF $j=K_{\mathrm{mu}}$, then $j:=1$; ELSE $j:=j+1$. \\
& ~~~~! Let next user with lower priority to select BS.
\\
07. & END\\
08. & $\Omega_L:= \overline{\Psi}$. \quad\quad\quad! Complete the BS
selection and get $\Omega_L$.
\\
\hline
\end{tabular}
} \caption{The pseudo codes to determine $\Omega_L$ in each fading
state by using the priority BS-selection algorithm for the
multi-user transmissions.} \label{fig-priority-selection}
\vspace{-15pt}
\end{figure}


As shown in Fig.~\ref{fig-priority-selection}, in each fading state
the BS-selection procedure starts with the selection for user
$\pi(1)$, who has the highest priority. After picking one BS for
user $\pi(1)$, we select one different BS for user $\pi(2)$. More
generally, after selecting for user $\pi(j)$, we choose one BS for
user $\pi(j+1)$ from the BS-subset $\Psi$, which consists of the
BS's that have not been selected. This procedure repeats until $L$
BS's are selected. For user-$\pi(j)$'s selection, we choose the BS
with the maximum aggregate power gain over the subset $\Psi$, where
$\gamma_{\pi(j),m}$ denotes the instantaneous aggregate power gain
between user $\pi(j)$ and the $m$th BS (see
Eq.~(\ref{eq-aggregate-power-gain}) for its definition). In
addition, after user-$\pi(K_{\mathrm{mu}})$'s selection, if the
number of selected BS's is still smaller than $L$, we continue
selecting one more BS for user $\pi(1)$, as shown in line~06 in
Fig.~\ref{fig-priority-selection}, and repeat this iterative
selection procedure until having selected $L$ BS's. Clearly, users
with higher priorities benefitted more from the above algorithm.
Also note that the mobile users' priority order is determined by the
effective-capacity fraction, which adapts to the mobile users' QoS
requirements.\vspace{3pt}


%

\noindent\emph{A.2.~The Principle of the Block Diagonalization
Technique}


The block-diagonalization (BD) precoding
techniques~\cite{Q-H-Spencer} have been widely used for MIMO
transmissions because of its low complexity. In this section, we
also apply the BD technique for our QoS-aware BS selection
framework. For completeness of this paper, the principles of the BD
technique are summarized as follows.


Given transmission mode $\mathfrak{M}_L=(\Omega_L,\mathcal{U}_L)$,
the idea of block diagonalization~\cite{Q-H-Spencer} is to use a
precoding matrix, denoted by
$\boldsymbol{\Gamma}_{\mathfrak{M}_L}^{(n)}$, for the $n$th user's
transmitted signal vector, where $n=n_u\in\mathcal{U}_L$ for some
$u$, such that
$\mathbf{H}_{\Omega_L}^{(i)}\boldsymbol{\Gamma}_{\mathfrak{M}_L}^{(n)}
=\mathbf{0}$ for all $i$ satisfying $i\neq n$ and
$i\in\mathcal{U}_L$. By setting $\mathbf{s}_{\mathfrak{M}_L}^{(n)}=
\boldsymbol{\Gamma}_{\mathfrak{M}_L}^{(n)}\widehat{\mathbf{s}}_{\mathfrak{M}_L}^{(n)}$,
where $\widehat{\mathbf{s}}_{\mathfrak{M}_L}^{(n)}$ is the $n$th
user's data vector to be precoded by
$\boldsymbol{\Gamma}_{\mathfrak{M}_L}^{(n)}$, we can rewrite the
received signal $\mathbf{y}_{\mathfrak{M}_L}^{(n)}$ as
\begin{eqnarray}
\begin{array}{l}
\mathbf{y}_{\mathfrak{M}_L}^{(n)} =\mathbf{H}_{\Omega_L}^{(n)}
\sum_{i\in \mathcal{U}_L}
\boldsymbol{\Gamma}_{\mathfrak{M}_L}^{(i)}\widehat{\mathbf{s}}_{\mathfrak{M}_L}^{(i)}+\boldsymbol{\varsigma}^{(n)}
=
\widehat{\boldsymbol{\Gamma}}_{\mathfrak{M}_L}^{(n)}\widehat{\mathbf{s}}_{\mathfrak{M}_L}^{(n)}
+ \boldsymbol{\varsigma}^{(n)},
\end{array}\nonumber
\end{eqnarray}
where
$\widehat{\boldsymbol{\Gamma}}_{\mathfrak{M}_L}^{(n)}\triangleq\mathbf{H}_{\Omega_L}^{(n)}
\boldsymbol{\Gamma}^{(n)}_{\mathfrak{M}_L}$. Under this strategy,
the $n$th user's signal will not cause interferences to other active
users. Accordingly, the MIMO broadcast transmissions are virtually
converted to $U$ orthogonal MIMO channels with channel matrices
$\big\{\widehat{\boldsymbol{\Gamma}}_{\mathfrak{M}_L}^{(n)}\big\}_{n\in\mathcal{U}_L}$.
Thus, the $n$th user's maximum achievable rate, denoted by
$R^{(n)}_{\mathfrak{M}_L}\big(\mathcal{P}_L^{(n)}\big)$, is equal to
the capacity of the equivalent MIMO channel
$\widehat{\boldsymbol{\Gamma}}_{\mathfrak{M}_L}^{(n)}$, as follows:
\begin{eqnarray}
&&\hspace{-25pt}
~R^{(n)}_{\mathfrak{M}_L}\!\big(\mathcal{P}_L^{(n)}\big)
\!\triangleq\max_{\boldsymbol{\Xi}^{(n)}}
\!\left\{\!BT\log\!\left[\det\!\!\left(\!\mathbf{I} +
\widehat{\boldsymbol{\Gamma}}_{\mathfrak{M}_L}^{(n)}\boldsymbol{\Xi}^{(n)}\!\!\left(\widehat{\boldsymbol{\Gamma}}_{\mathfrak{M}_L}^{(n)}\right)^{\!\!\dag}\right)\!\right]\!\right\}
\nonumber
\\
&&\hspace{-40pt}
\label{eq-multiple-Rn-definition}
\end{eqnarray}
subject to
$\mathrm{Tr}\big(\boldsymbol{\Xi}^{(n)}\big)=\mathcal{P}_L^{(n)}$
for $n\in\mathcal{U}_L$, where $\boldsymbol{\Xi}^{(n)}$ is the
covariance matrix of $\widehat{\mathbf{s}}_{\mathfrak{M}_L}^{(n)}$
and $\mathcal{P}_L^{(n)}$ denotes the power allocated for the $n$th
user under mode $\mathfrak{M}_L$. Correspondingly, we will set the
service rate $R_n$ of the $n$th user equal to
$R^{(n)}_{\mathfrak{M}_L}\big(\mathcal{P}_L^{(n)}\big)$. Note that
$\boldsymbol{\Gamma}_{\mathfrak{M}_L}^{(n)}$ may not exist, which
then results in a service rate equal to 0. Also, we set
$R_n=R^{(n)}_{\mathfrak{M}_L}\big(\mathcal{P}_L^{(n)}\big)= 0$ for
$n\notin\mathcal{U}_L$ or $L=0$. For the procedures to determine the
precoding matrix $\boldsymbol{\Gamma}_{\mathfrak{M}_L}^{(n)}$ of the
$n$th user, where $n=n_u\in\mathcal{U}_L$ for some $u$, please refer
to~\cite{Q-H-Spencer}.

\noindent\emph{A.3.~Derivation of Active-User Set $\mathcal{U}_L$}

Note that given $\Omega_L$ we may not be able to
accommodate all users, because of the limited number transmit
antennas. Although several algorithms for selecting active-user set
have been proposed~\cite{T-Yoo,S-Kaviani}, they cannot be applied in
the framework of this paper, because the QoS provisioning for mobile
users are not addressed those in these algorithms. Next, we
determine $\mathcal{U}_L$ through a joint channel-priority method
for active user selections. The pseudo code of this algorithm is
provided in Fig.~\ref{fig-user-selection}.

\begin{figure}
\footnotesize \vspace{7pt} \centerline{
\begin{tabular}{p{8.3cm}}
\hline\\
\end{tabular}
}\vspace{-6pt} \centerline{
\begin{tabular}{p{0.05cm} p{7.95cm}}
01. & Let $\Lambda :=\{1,2,\ldots,K_{\mathrm{mu}}\}$,
$\overline{\Lambda}:= \varnothing$, and $M_{\Sigma}\triangleq
\sum_{\ell\in\Omega}M_{\ell}$.
\\
02. & WHILE $(\Lambda\neq\varnothing)$ \quad 
\\
03. & ~~For all $n\in\Lambda$ \\
04. & ~~~~Temporarily setting $\mathcal{U}_L:=\overline{\Lambda}
\cup
\{n\}$. \\
05. & ~~~~Get $\boldsymbol{\Gamma}_{\mathfrak{M}_L}^{(n)}$.
Then, set\vspace{-7pt}
\begin{eqnarray}
~~~~\varpi_n&\!\!\!\!\!\!\!:=&\!\!\!\!\!\!
\frac{1}{M_{\Sigma}}\mathbb{E}_{\mathbf{H}}\left\{\left\|\mathbf{H}_{\Omega_L}^{(n)}\boldsymbol{\Gamma}_{\mathfrak{M}_L}^{(n)}\right\|_F^2\nonumber
\left|\mbox{Fixing}~\boldsymbol{\Gamma}_{\mathfrak{M}_L}^{(n)}\right.\right\}\\
&\!\!\!\!\!\!\!=&\!\!\!\!\!\!
\frac{1}{M_{\Sigma}}\overline{\boldsymbol{\Upsilon}}_{\Omega_L}^{(n)}\left[\mathrm{conj}\left(\boldsymbol{\Gamma}_{\mathfrak{M}_L}^{(n)}\right)\circ\boldsymbol{\Gamma}_{\mathfrak{M}_L}^{(n)}\right],
\nonumber
\end{eqnarray}
\vspace{-10pt}\\
& ~~~~where $\overline{\boldsymbol{\Upsilon}}_{\Omega_L}^{(n)}$ is
given by Eq.~(\ref{eq-multiple-BD-csi-avg-power-gain}); $(\cdot\circ\cdot)$ generates \\
& ~~~~element-wise product between two matrices with the same
size;\\
&~~~~$\mathrm{conj}(\cdot)$ yields the element-wise conjugation.
\\
06. & ~~~~Set\vspace{-5pt}
\begin{eqnarray}
~~~~\widehat{\gamma}_n \!:= \!\left\{\!
\begin{array}{ll}
1, & \mbox{if} ~  0<\varpi_n \leq
\frac{1}{M_{\Sigma}}\left\|\mathbf{H}_{\Omega_L}^{(n)}\boldsymbol{\Gamma}_{\mathfrak{M}_L}^{(n)}
\right\|_F^2;
\\
0, & \mbox{otherwise}.
\end{array}
\right.\nonumber
\end{eqnarray}
\vspace{-10pt}\\
07. & ~~END\\
08. & ~~Select $\widehat{u}$ such that for all $j\in\Lambda$,
$j\neq\widehat{u}$, the following condition\vspace{-5pt}
\begin{eqnarray}
~~~(\widehat{\gamma}_{\widehat{u}}>\widehat{\gamma}_j) ~\mbox{or}~
(\widehat{\gamma}_{\widehat{u}}=\widehat{\gamma}_j~\&~
 \mbox{user} ~ \widehat{u} ~\mbox{has higher priority than user $j$})\nonumber
\end{eqnarray}
\vspace{-13pt}
\\
& ~~holds, where the priority order is determined in
Section~\ref{sect-multiple-BD-multi-mode}.1.
\\
09.& ~~IF $\varpi_{\widehat{u}}>0$,
$\overline{\Lambda}:=\overline{\Lambda}\cup\{\widehat{u}\}$ and
$\Lambda:=\Lambda-\{\widehat{u}\}$; else BREAK.\\
10. & END\\
11. & Set $\mathcal{U}_L:= \overline{\Lambda}$.\\
\hline
\end{tabular}
}  \caption{Pseudo codes of the block-diagonalization based joint
channel-priority algorithm to determine the active-user set
$\mathcal{U}_L$ in each fading state.}\vspace{-15pt}
\label{fig-user-selection}
\end{figure}

In the joint channel-priority algorithm provided by
Fig.~\ref{fig-user-selection}, we iteratively select users one by
one into the set $\mathcal{U}_L$. In particular, we use variables
$\Lambda$ and $\overline{\Lambda}$ to represent the temporary sets
of users which have and have not been selected, respectively. As
shown in Fig.~\ref{fig-user-selection}, lines~02 through~10 describe
loops for iterative user selection, where we pick one user in each
loop until all users are selected (i.e., $\Lambda=\varnothing$) or
no more user can be accommodated (examined by line 12). Within each
loop, given the existing active-user set $\overline{\Lambda}$ we
examine the channel quality of each user after BD. Specifically, we
first get the BD precoding matrix of the $n$th user. Then, we derive
$\varpi_n$, which is average channel-power-gain after BD over all
transmit antennas, representing the average channel quality, and
also obtain
$\|\mathbf{H}_{\Omega_L}^{(n)}\boldsymbol{\Gamma}_{\mathfrak{M}_L}^{(n)}\|_F^2/M_{\Sigma}$
line 06, which characterizes the instantaneous channel quality. We
further define a variable $\widehat{\gamma}_n$, as shown in line~07,
where $\widehat{\gamma}_n=1$ and $\widehat{\gamma}_n=0$ indicate
that the channel quality is above and below the average level,
respectively. Obtaining $\widehat{\gamma}_n$, our selection criteria
are as follows. First, we desire to select the user with higher
$\widehat{\gamma}_n$, implying that this user's current channel is
better compared with its statistical channel qualities, which will
more efficiently use the system resources towards this user's QoS
requirement. Second, if two users have the same
$\widehat{\gamma}_n$, we will select the user with higher priority.
Following this criterion, in line 08, we pick the unique user from
$\Lambda$ in the current loop, whose index is denoted $\widehat{u}$.
However, if $\varpi_{\widehat{u}}=0$, implying the maximum
achievable rate equal 0. As a result, no more user can be admitted,
including the $\widehat{u}$th user. We will then terminate the loop,
as shown in line 09, to finish the selection process.


\subsection{BS Selection in Single-User Transmission Modes}
\label{sect-multiple-BD-single-mode}

For single-user transmission modes, we have $\mathcal{U}_L=\{n\}$,
$n\in\{1,2,\ldots,K_{\mathrm{mu}}\}$. Thus, at any time instant,
there is only one user receiving data from multiple BS's through a
single-user MIMO channel $H_{\Omega_L}^{(n)}$. Accordingly, the
maximum achievable rate for the $n$th user is equal to the capacity
of $H_{\Omega_L}^{(n)}$ with power $\mathcal{P}_L$, which is denoted
by $R_{\Omega_L}^{(n)}$. However, even for the single-user case, the
complexity of high of choosing $\Omega_L$ to maximize the achievable
data rate is too high, since we need to examine all $K_{\mathrm{bs}}
\choose L$ combinations. Norm-based antenna selections have been
demonstrated to be effective in achieving high system throughput
with low complexity in centralized MIMO
system~\cite{S-Sanayei,M-Gharavi-Alkhansari}, which can be also
extended to BS-selection in distributed MIMO system. Specifically,
for the $n$th user with specified $L$ in our framework, we select
BS's with $L$ largest aggregate channel power gain. As a result, in
each fading state we have $K_{\mathrm{bs}}K_{\mathrm{mu}}$
single-user modes as candidates for distributed MIMO transmissions.
Given the transmission mode with $\Omega_L$ and the active user $n$,
we will set the service rate $R_n$ equal to $R_{\Omega_L}^{(n)}$.


\subsection{The Optimization Framework for Transmission Mode Selection and Resource Allocation}
\label{sect-multiple-BD-framework-solution}

We have derived candidate $(\Omega_L, \mathcal{U}_L)$ in multi-user
and single-user transmissions modes, respectively. We still need to
answer how to allocate power over different mobile users in
multi-user modes and which transmission mode will be eventually used
for distributed MIMO transmissions. In this section, we employ the
probabilistic transmission to determine finally selecting which
transmission mode. Specifically, we use multi-user mode
$(\Omega_L,\mathcal{U}_L)$ determined through algorithms given in
Figs.~\ref{fig-priority-selection} and~\ref{fig-user-selection} with
a probability denoted by $\phi_L$, $L=0,1,2,\ldots,K_{\mathrm{bs}}$;
also, we use single-user mode with BS-subset cardinality $L$ and
$\mathcal{U}_L$ with a probability denoted by $q_{L,n}$,
$L=1,2,\ldots,K_{\mathrm{bs}}$, $n=1,2,\ldots,K_{\mathrm{mu}}$. Note
that $\phi_0$ is the probability of the case that nothing is
transmitted. Clearly, the sum over all $q_{L,n}$ and $\phi_L$ must
be equal to 1. For multi-user mode, we denote power allocated to the
$n$th user in transmission mode $(\Omega_L,\mathcal{U}_L)$ by
$\mathcal{P}_L^{(n)}$, while the total power constraint is given by
Eq.~(\ref{eq-power-linear-increase}). For presentation convenience,
we further define
$\boldsymbol{\phi}\triangleq(\phi_1,\phi_2,\ldots,\phi_{K_{\mathrm{mu}}})$
and
$\mathbf{q}\triangleq(\mathbf{q}_1,\mathbf{q}_2,\ldots,\mathbf{q}_{K_{\mathrm{bs}}})$
with
$\mathbf{q}_L\triangleq(q_{L,1},q_{L,2},\ldots,q_{L,K_{\mathrm{mu}}})$
to describe the probabilistic transmission policy; we also define
$\boldsymbol{\mathcal{P}} \triangleq
\big(\boldsymbol{\mathcal{P}}_1, \boldsymbol{\mathcal{P}}_2, \ldots,
\boldsymbol{\mathcal{P}}_{K_{\mathrm{bs}}} \big)$ with
$\boldsymbol{\mathcal{P}}_L \triangleq \big(\mathcal{P}_{L}^{(1)},
\mathcal{P}_{L}^{(2)},\ldots, \mathcal{P}_{L}^{(K_{\mathrm{mu}})}
\big)$ to characterize the power allocation policy in
\emph{multi-user} modes.
%
%
Then, we formulate the following optimization problem
$\boldsymbol{A1}$ to derive the efficient transmission-mode
selection and the corresponding power allocation policy:
\vspace{3pt}

\noindent $\boldsymbol{A1}$: Joint BD-PT based BS-usage
minimization\vspace{-3pt}
\begin{eqnarray}
&&
\hspace{-50pt}\min_{(\boldsymbol{\phi},\mathbf{q},\boldsymbol{\mathcal{P}})}
\left\{\mathbb{E}_{\mathbf{H}}
\left\{\sum_{L=1}^{K_{\mathrm{bs}}}L\left(\phi_L+\sum_{n=1}^{K_{\mathrm{mu}}}q_{L,n}\right)\right\}\right\}\nonumber\\
&&\hspace{-50pt}\mbox{s.t.: }
\,1).~\sum_{L=0}^{K_{\mathrm{bs}}}\phi_L
+\sum_{L=1}^{K_{\mathrm{bs}}}\sum_{n=1}^{K_{\mathrm{mu}}}q_{L,n}= 1,
\quad\quad \forall\,\mathbf{H}\label{eq-multiple-PT-constraint-pt}\\
&&\hspace{-50pt}2).~\sum_{n=1}^{K_{\mathrm{mu}}}
\mathcal{P}_L^{(n)}= \mathcal{P}_L,\quad\quad \forall\,
L,\,\mathbf{H}; \label{eq-multiple-PT-constraint-power}
\\
&&\hspace{-50pt}3).~\mathbb{E}_{\mathbf{H}}\!\left\{\sum_{L=0}^{K_{\mathrm{bs}}}
\!\left(\!\phi_Le^{-\theta_n
R^{(n)}_{\mathfrak{M}_L}\big(\mathcal{P}_L^{(n)}\big)}+q_{L,n}
e^{-\theta_n R^{(n)}_{\Omega_L}}\!\right)\right.
\nonumber\\
&&\hspace{-50pt}\hspace{-15pt}\quad\quad\quad\quad\quad\quad+\sum_{L=0}^{K_{\mathrm{bs}}}\sum_{j,j\neq
n}q_{L,j}\Bigg\}- e^{-\theta_n\overline{C}_n}\leq 0, ~\forall\,
n.\!\!\!\! \!\!\!\! \!\!\!\! \label{eq-multiple-PT-constraint-effcp}
\end{eqnarray}

\subsection{Derivations of the Optimal Solution of Problem $\boldsymbol{A1}$}
\label{sect-multiple-BD-solution}

 \noindent\emph{D.1.~The Properties of
$R^{(n)}_{\mathfrak{M}_L}\big(\mathcal{P}_L^{(n)}\big)$}

Before solving $\boldsymbol{A1}$, we first summarize the properties
of $R^{(n)}_{\mathfrak{M}_L}\big(\mathcal{P}_L^{(n)}\big)$
determined by Eq.~(\ref{eq-multiple-Rn-definition}). Based on
results in~\cite{E-Telatar}, the $n$th user's MIMO channel
$\widehat{\boldsymbol{\Gamma}}_{\mathfrak{M}_L}^{(n)}$ (after BD)
can be converted to $Z_L^{(n)}$ parallel Gaussian sub-channels,
where $Z_L^{(n)}$ is the rank of
$\widehat{\boldsymbol{\Gamma}}_{\mathfrak{M}_L}^{(n)}$.
Correspondingly, the $z$th sub-channel's SNR is equal to
$\varepsilon^{(n)}_{L,z}$, where the square root of
$\varepsilon^{(n)}_{L,z}$ is the $z$th largest nonzero singular
value of $\widehat{\boldsymbol{\Gamma}}_{\mathfrak{M}_L}^{(n)}$. The
optimal power $\rho^{(n)}_{L,z}$ allocated to the $z$th sub-channel
follows the water-filling allocation, which is equal to
$\rho^{(n)}_{L,z}=\big[\mu_L^{(n)}-1/\varepsilon^{(n)}_{L,z}\big]^+$,
where $[\cdot]^+\triangleq\max\{\cdot,0\}$ and $\mu_L^{(n)}$ is
selected such that
$\sum_{z=1}^{Z_L^{(n)}}\rho^{(n)}_{L,z}=\mathcal{P}_L^{(n)}$. Since
$\widehat{\boldsymbol{\Gamma}}_{\mathfrak{M}_L}^{(n)}$ has only
$Z_L^{(n)}$ non-zero singular values, we define
$1/\varepsilon_{L,i}^{(n)}\triangleq \infty$ for $i=Z_L^{(n)}+1$ and
$1/\varepsilon_{L,i}^{(n)}\triangleq 0$ for $i=0$. We can further
show that $R^{(n)}_{\mathfrak{M}_L}\big(\mathcal{P}_L^{(n)}\big)$ is
a strictly concave function and
\begin{eqnarray}
\begin{array}{l}
\frac{d R^{(n)}_{\mathfrak{M}_L}\big(\mathcal{P}_L^{(n)}\big)}{d
\mathcal{P}_L^{(n)}}   = \frac{BT}{\mu_L^{(n)}}
\end{array}\label{eq-multiple-PT-derivative-R/P}
\end{eqnarray}
holds. Moreover, if
$\mu_L^{(n)}\in\Big[1/\varepsilon_{L,i}^{(n)},1/\varepsilon_{L,i+1}^{(n)}\Big)$
for $i=1,2,\ldots,Z_L^{(n)}$, we get:
\begin{eqnarray}
\hspace{-20pt}\mathcal{P}_L^{(n)} &\!\!\!\!=&\!\!\!\!\!\!
\begin{array}{l}
\left[i \mu_L^{(n)} - \sum_{j=1}^i
\frac{1}{\varepsilon_{L,j}^{(n)}}\right]^+;
\end{array}
\label{eq-multiple-Rn-water-filling-1}
\\
\hspace{-20pt}R^{(n)}_{\mathfrak{M}_L}\big(\mathcal{P}_L^{(n)}\big)
&\!\!\!\!=&\!\!\!\!\!\!
\begin{array}{l}
BT\log\left(\prod_{j=1}^i \varepsilon^{(n)}_{L,j}\right)
+BTi\log\mu_L^{(n)}.
\end{array}
\label{eq-multiple-Rn-water-filling-2}
\end{eqnarray}

\vspace{3pt}\noindent\emph{D.2.~The Optimal Solution to
$\boldsymbol{A1}$}


\begin{theorem}
The optimal solution for optimization problem $\boldsymbol{A1}$, if
existing, is given by
\begin{eqnarray}
\left(\mathcal{P}_L^{(n)}\right)^* = \left\{
\begin{array}{ll}
\left[i^* \mu_L^{(n)} - \sum_{j=1}^{i^*}
\frac{1}{\varepsilon_{L,j}^{(n)}}\right]^+, & \mbox{if} ~
n\in\mathcal{U}_L;\vspace{3pt}\\
0 , & \mbox{if} ~ n\notin\mathcal{U}_L;
\end{array}
\right.
\label{eq-multiple-PT-A1-opt-power}
\end{eqnarray}
for all $n$, $L$, and $\mathbf{H}$, where $\varepsilon^{(n)}_{L,j}$
is the square of
$\widehat{\boldsymbol{\Gamma}}_{\mathfrak{M}_L}^{(n)}$'s $j$th
largest singular value; $(\mu_L^{(n)}, i^*)$ is the unique solution
satisfying the following conditions:
\begin{eqnarray}
&&\hspace{-35pt}\left\{\!\!\!
\begin{array}{lcl}
\mu_L^{(n)} &\!\!\!\!=&
\!\!\!\!\left(\frac{\zeta^*_{\mathbf{H},L}}{BT\theta_n\lambda^*_n}\right)^{-\frac{1}{1+i^*BT\theta_n}}\prod_{j=1}^{i^*}
\left(\varepsilon^{(n)}_{L,j}\right)^{-\frac{BT\theta_n}{1+i^*BT\theta_n}}
;\vspace{3pt}\\
\mu_L^{(n)}
&\!\!\!\!\in&\!\!\!\!\left[\frac{1}{\varepsilon_{L,i^*}^{(n)}},\frac{1}{\varepsilon_{L,i^*+1}^{(n)}}\right),\quad\forall\,n,\,L,\,\mathbf{H}.
\end{array}
\right. \label{eq-multiple-PT-A1-opt-istar-and-mu}
\end{eqnarray}
The corresponding optimal PT policy is determined by
\begin{eqnarray}
\left\{
\begin{array}{lcl}
\phi_L^* = 1_{(\psi_L=\psi^*)};\\
q_{L,n}^* = 1_{(\psi_{L,n}=\psi^*)},
\end{array}
\right.\label{eq-multiple-PT-A1-opt-phi-q}
\end{eqnarray}
where $1_{(\cdot)}$ is the indication function and $\psi^*$ is
defined as
\begin{eqnarray}
\psi^* \triangleq \min\left\{ \min_{L}\left\{\psi_L\right\},
\min_{(L,n)}\left\{\psi_{L,n}\right\}\right\}\label{eq-multiple-PT-A1-opt-psi-star}
\end{eqnarray}
with
\begin{eqnarray}
&&\hspace{-20pt}\left\{
\begin{array}{ccl}
\!\!\!\psi_L &\!\!\!\!\!\triangleq& \!\!\!\!L \!+\!
\sum_{n=1}^{K_{\mathrm{mu}}} \lambda_n^*
e^{-\theta_nR^{(n)}_{\mathfrak{M}_L}\left(\big(\mathcal{P}_L^{(n)}\big)^{\!*}\right)},
~~ 0\leq L\leq K_{\mathrm{bs}};\vspace{3pt}
\\
\!\!\!\psi_{L,n} & \!\!\!\!\!\triangleq & \!\!\!\!L \!+\!
\lambda_n^* e^{-\theta_nR^{(n)}_{\Omega_L}} + \sum_{j,j\neq
n}\lambda_j^*, ~~ L\in[1,K_{\mathrm{bs}}], ~\forall
\,n;
\end{array}
\right. \nonumber
\end{eqnarray}
if multiple $\psi_L$'s and/or $\psi_{L,n}$'s all equal to $\psi^*$,
the corresponding transmission modes will be allocated equal
probability with the sum probability equal to 1. The variables
$\{\lambda_n^*\}_{n=1}^{K_{\mathrm{mu}}}$ are constants over all
fading state; given $\{\lambda_n^*\}_{n=1}^{K_{\mathrm{mu}}}$,
$\zeta^*_{\mathbf{H},L}$ is selected to satisfy the equation
$\sum_{n=1}^{K_{\mathrm{mu}}}\big(\mathcal{P}_L^{(n)}\big)^*=\mathcal{P}_L$
for all $L$ and $\mathbf{H}$; accordingly
$\{\lambda_n^*\}_{n=1}^{K_{\mathrm{mu}}}$ need to be selected such
that the equality of Eq.~(\ref{eq-multiple-PT-constraint-effcp})
holds. \label{theorem-PBS-BD-PT}
\end{theorem}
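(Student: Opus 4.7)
The plan is to attack Problem $\boldsymbol{A1}$ by Lagrangian duality, exploiting the observation that only the effective-capacity constraints in Eq.~(\ref{eq-multiple-PT-constraint-effcp}) couple different fading states, whereas Eqs.~(\ref{eq-multiple-PT-constraint-pt}) and~(\ref{eq-multiple-PT-constraint-power}) must hold pointwise in $\mathbf{H}$. I would therefore introduce global multipliers $\lambda_n\!\geq\!0$, $n=1,\dots,K_{\mathrm{mu}}$, for the $K_{\mathrm{mu}}$ QoS inequalities, and per-state multipliers $\eta(\mathbf{H})$ and $\zeta_{\mathbf{H},L}$ for the simplex and total-power equalities, respectively. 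Since the $\lambda_n$'s are constants with respect to $\mathbf{H}$, after pulling the outer expectation through, the dual function decouples into an unconstrained per-state minimization that can be solved separately for each realization of $\mathbf{H}$. This is the standard ``two-level" structure behind QoS-constrained wireless optimization and is the backbone of the argument.

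I would then solve the per-state problem in two nested stages. First (inner stage), for each cardinality $L$ fix $\phi_L$ and optimize $\boldsymbol{\mathcal{P}}_L$ under $\sum_n\mathcal{P}_L^{(n)}=\mathcal{P}_L$. Differentiating the Lagrangian, applying Eq.~(\ref{eq-multiple-PT-derivative-R/P}), and using the closed-form rate expression in Eq.~(\ref{eq-multiple-Rn-water-filling-2}), the stationarity condition becomes
\begin{equation}
\lambda_n BT\theta_n\,\Bigl(\prod_{j=1}^{i}\varepsilon^{(n)}_{L,j}\Bigr)^{-BT\theta_n}\!\bigl(\mu_L^{(n)}\bigr)^{-1-iBT\theta_n}=\zeta^{\,}_{\mathbf{H},L},\nonumber
\end{equation}
which I solve for $\mu_L^{(n)}$ to obtain exactly the first line of Eq.~(\ref{eq-multiple-PT-A1-opt-istar-and-mu}). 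The second line records the water-level interval that fixes the active-subchannel count $i^{\ast}$; uniqueness of $(\mu_L^{(n)},i^{\ast})$ follows from strict concavity of $R_{\mathfrak{M}_L}^{(n)}(\mathcal{P}_L^{(n)})$ noted in Section~D.1, so that $\mu_L^{(n)}\mapsto\mathcal{P}_L^{(n)}$ from Eq.~(\ref{eq-multiple-Rn-water-filling-1}) is strictly increasing. Substituting back into Eq.~(\ref{eq-multiple-Rn-water-filling-1}) yields precisely Eq.~(\ref{eq-multiple-PT-A1-opt-power}), with $\zeta^{\ast}_{\mathbf{H},L}$ chosen to meet the total-power budget.

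Next (outer stage), with the inner minima in hand, the per-state Lagrangian becomes linear in $(\boldsymbol{\phi},\mathbf{q})$ on the probability simplex induced by Eq.~(\ref{eq-multiple-PT-constraint-pt}), with coefficients $\psi_L$ on $\phi_L$ and $\psi_{L,n}$ on $q_{L,n}$; the extra term $\sum_{j\neq n}\lambda_j^{\ast}$ in $\psi_{L,n}$ comes from the fact that in a single-user mode all other users contribute $e^{-\theta_j\cdot 0}=1$ to their respective effective-capacity terms. A linear program over a simplex attains its minimum at a vertex, so the optimizer places probability one on any mode whose coefficient equals $\psi^{\ast}$ in Eq.~(\ref{eq-multiple-PT-A1-opt-psi-star}), which is precisely the indicator form in Eq.~(\ref{eq-multiple-PT-A1-opt-phi-q}); ties are resolved by any convex combination, as stated. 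Finally, the global $\lambda_n^{\ast}$ are pinned down by complementary slackness: the QoS constraint for user $n$ must be tight at optimum (otherwise one could relax the transmission policy toward a smaller BS-usage), yielding the equality version of Eq.~(\ref{eq-multiple-PT-constraint-effcp}).

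The hard part will be the inner power subproblem: the objective $\sum_n\lambda_n e^{-\theta_n R_{\mathfrak{M}_L}^{(n)}(\mathcal{P}_L^{(n)})}$ is convex in $\mathcal{P}_L^{(n)}$ (composition of the decreasing convex exponential with the concave capacity), but $R_{\mathfrak{M}_L}^{(n)}$ is only piecewise-smooth because of water-filling, so I must combine Eqs.~(\ref{eq-multiple-PT-derivative-R/P})--(\ref{eq-multiple-Rn-water-filling-2}) carefully across sub-channel activation thresholds to justify both the closed-form expression for $\mu_L^{(n)}$ and the uniqueness of the index $i^{\ast}$. Everything downstream (the linear-program reduction over the simplex and the complementary-slackness determination of $\lambda_n^{\ast}$) is then essentially bookkeeping, provided the feasible set is nonempty, which is exactly the ``if existing" qualifier in the theorem.
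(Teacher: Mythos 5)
Your proposal is correct and takes essentially the same approach as the paper: a Lagrangian dual decomposition with constant multipliers $\lambda_n$ for the QoS constraints and per-fading-state multipliers for the power constraints, an inner water-filling power stage whose stationarity condition reproduces Eq.~(\ref{eq-multiple-PT-A1-opt-istar-and-mu}), an outer linear program over the probability simplex yielding the indicator-form PT policy of Eq.~(\ref{eq-multiple-PT-A1-opt-phi-q}), and multipliers pinned down by forcing the dualized constraints to hold with equality. The only element the paper makes explicit that you leave implicit is the closing weak-duality argument --- continuity/differentiability of the dual via its subgradients, zero duality gap once the dual-derived point is feasible with all dualized constraints tight, and $\lambda_n^*\to\infty$ signaling infeasibility --- which is needed because $\boldsymbol{A1}$ is not manifestly convex and so dual optimality does not automatically certify primal optimality.
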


\begin{proof}
We construct $\boldsymbol{A1}$'s Lagrangian function, denoted by
$\mathcal{J}_{A1}(\boldsymbol{\phi},\mathbf{q},\boldsymbol{\mathcal{P}};\boldsymbol{\lambda},\boldsymbol{\zeta}_{\mathbf{H}})$,
as
\begin{eqnarray}
\mathcal{J}_{A1}(\boldsymbol{\phi},\mathbf{q},\boldsymbol{\mathcal{P}};\boldsymbol{\lambda},\boldsymbol{\zeta}_{\mathbf{H}})
=\mathbb{E}_{\mathbf{H}}\left\{J_{A1}(\boldsymbol{\phi},\mathbf{q},\boldsymbol{\mathcal{P}};\boldsymbol{\lambda},\boldsymbol{\zeta}_{\mathbf{H}})\right\}
\label{eq-multiple-PT-A1-Lagrangian-1}
\end{eqnarray}
subject to
$\sum_{L=0}^{K_{\mathrm{bs}}}\phi_L+\sum_{L=1}^{K_{\mathrm{bs}}}\sum_{n=1}^{K_{\mathrm{mu}}}q_{L,n}=1$,
where
\begin{eqnarray}
&&
\hspace{-25pt}J_{A1}(\boldsymbol{\phi},\mathbf{q},\boldsymbol{\mathcal{P}};\boldsymbol{\lambda},\boldsymbol{\zeta}_{\mathbf{H}})\nonumber
\\
&&\hspace{-20pt}\triangleq
\!\!\sum_{L=0}^{K_{\mathrm{bs}}}L\left(\phi_L+\sum_{n=1}^{K_{\mathrm{mu}}}q_{L,n}\right)
+\sum_{L=1}^{K_{\mathrm{bs}}}\zeta_{\mathbf{H},L}\left(\sum_{n=1}^{K_{\mathrm{mu}}}
\mathcal{P}_{L}^{(n)}- \mathcal{P}_L\right)
\nonumber\\
&&\hspace{-20pt}
+\!\!\sum_{n=1}^{K_{\mathrm{mu}}}\!\lambda_n\Bigg[\sum_{L=0}^{K_{\mathrm{bs}}}\!
\left(\!\phi_Le^{-\theta_n
R^{(n)}_{\Omega_L\!,\,\mathcal{U}_L}\!\big(\!\mathcal{P}_L^{(n)}\!\big)}
\!\!+\!q_{L,n} e^{-\theta_n
R^{(n)}_{\Omega_L}}\!\right)\nonumber\\
&&\quad\quad\quad\quad\quad\quad\quad
+\Bigg(\sum_{L=0}^{K_{\mathrm{bs}}}\sum_{j,j\neq
n}q_{L,j}\Bigg)-\!e^{-\theta_n\overline{C}_n}\!\Bigg].
\label{eq-multiple-PT-A1-Lagrangian-2}
\end{eqnarray}
In
Eqs.~(\ref{eq-multiple-PT-A1-Lagrangian-1})-(\ref{eq-multiple-PT-A1-Lagrangian-2}),
$\boldsymbol{\lambda}\triangleq(\lambda_1,\lambda_2,\ldots,\lambda_{K_{\mathrm{mu}}})$
and $\lambda_n$'s are the Lagrangian multipliers associated with
Eq.~(\ref{eq-multiple-PT-constraint-effcp}), which are constants
over all fading states and satisfies $\lambda_n\geq0$;
$\{\zeta_{\mathbf{H},L}\}_{L=1}^{K_{\mathrm{bs}}}$ are the
Lagrangian multipliers associated with
Eq.~(\ref{eq-multiple-PT-constraint-power}) in each fading state,
and
$\boldsymbol{\zeta}_{\mathbf{H},L}\triangleq(\zeta_{\mathbf{H},1},\zeta_{\mathbf{H},2},\ldots,\zeta_{\mathbf{H},K_{\mathrm{bs}}})$.

The optimization problem $\boldsymbol{A1}$'s Lagrangian dual
function~\cite{M-Bazaraa}, denoted by
$\mathfrak{J}_{A1}(\boldsymbol{\lambda},\boldsymbol{\zeta}_{\mathbf{H}})$,
is determined by
\begin{eqnarray}
\hspace{-20pt}\mathfrak{J}_{A1}(\boldsymbol{\lambda},\boldsymbol{\zeta}_{\mathbf{H}})
&\!\!\!\triangleq&\!\!\!\min_{(\boldsymbol{\phi},\mathbf{q},\boldsymbol{\mathcal{P}})}
\Big\{\mathcal{J}_{A1}(\boldsymbol{\phi},\mathbf{q},\boldsymbol{\mathcal{P}};\boldsymbol{\lambda},\boldsymbol{\zeta}_{\mathbf{H}})
\Big\}
\nonumber\\
&\!\!\!=&\!\!\!
\mathbb{E}_{\mathbf{H}}\left\{\min_{(\boldsymbol{\phi},\mathbf{q},\boldsymbol{\mathcal{P}})}\Big\{
J_{A1}(\boldsymbol{\phi},\mathbf{q},\boldsymbol{\mathcal{P}};\boldsymbol{\lambda},\boldsymbol{\zeta}_{\mathbf{H}})\Big\}\right\}.
\label{eq-multiple-PT-A1-dual-derive-1}
\end{eqnarray}
subject to
$\sum_{L=0}^{K_{\mathrm{bs}}}\phi_L+\sum_{L=1}^{K_{\mathrm{bs}}}\sum_{n=1}^{K_{\mathrm{mu}}}q_{L,n}=1$
for all $\mathbf{H}$. Lagrangian duality theory
shows~\cite{M-Bazaraa} that
$\mathfrak{J}_{A1}(\boldsymbol{\lambda},\boldsymbol{\zeta}_{\mathbf{H}})$
is always a concave function, whose maximizer is upper-bounded by
the optimum of $\boldsymbol{A1}$. We then denote the maximizer of
$\mathfrak{J}_{A1}(\boldsymbol{\lambda},\boldsymbol{\zeta}_{\mathbf{H}})$
by $(\boldsymbol{\lambda}^*,\boldsymbol{\zeta}_{\mathbf{H}}^*)$.
Also, we denote by
$(\boldsymbol{\phi}^*,\mathbf{q}^*,\boldsymbol{\mathcal{P}}^*)$ the
minimizer to Eq.~(\ref{eq-multiple-PT-A1-dual-derive-1}), which
varies with $(\boldsymbol{\lambda},\boldsymbol{\zeta})$. Then, given
$(\boldsymbol{\lambda}^*,\boldsymbol{\zeta}_{\mathbf{H}}^*)$, we
have
\begin{eqnarray}
\hspace{-20pt}(\boldsymbol{\phi}^*,\mathbf{q}^*) &\!\!\!\!=&\!\!\!\!
\arg\min_{(\boldsymbol{\phi},\mathbf{q})}\left\{J_{A1}(\boldsymbol{\phi},\mathbf{q},\boldsymbol{\mathcal{P}}^*;\boldsymbol{\lambda}^*,\boldsymbol{\zeta}_{\mathbf{H}}^*)\right\}
\label{eq-multiple-PT-A1-dual-derive-2-a}\nonumber
\\
& \!\!\!\!\stackrel{(a)}{=} &
\!\!\!\!\arg\min_{(\boldsymbol{\phi},\mathbf{q})}\left\{\sum_{L=0}^{K_{\mathrm{bs}}}
\phi_L\psi_L +
\sum_{L=1}^{K_{\mathrm{bs}}}\sum_{n=1}^{K_{\mathrm{mu}}}q_{L,n}\psi_{L,n}\right\},
\label{eq-multiple-PT-A1-dual-derive-2}
\end{eqnarray}
for all $\mathbf{H}$, where $\psi_L$ and $\psi_{L,n}$ is defined in
Theorem~\ref{theorem-PBS-BD-PT}, and equation $(a)$ holds by
applying Eq.~(\ref{eq-multiple-PT-A1-Lagrangian-2}) and removing the
terms independent of $\boldsymbol{\phi}$. Solving
Eq.~(\ref{eq-multiple-PT-A1-dual-derive-2}) subject to
$\sum_{L=0}^{K_{\mathrm{bs}}}\phi_L+\sum_{L=1}^{K_{\mathrm{bs}}}\sum_{n=1}^{K_{\mathrm{mu}}}q_{L,n}=1$,
we obtain
Eqs.~(\ref{eq-multiple-PT-A1-opt-phi-q})-(\ref{eq-multiple-PT-A1-opt-psi-star}).
If multiple $\psi_L$'s and/or $\psi_{L,n}$'s all equal to $\psi^*$,
which happens with probability zero when $\mathbf{H}$ has continues
CDF, how to allocate probabilities across these modes does not
affect the eventual results. Therefore, without loss of generality
we allocate the corresponding transmission modes with equal
probability while keeping their sum equal to 1.

It is clear that $\big(\mathcal{P}_L^{(n)}\big)^*=0$ for
$n\in\mathcal{U}_L$. Next, we consider $n\in\mathcal{U}_L$. Based on
Eqs.~(\ref{eq-multiple-PT-A1-opt-phi-q})-(\ref{eq-multiple-PT-A1-opt-psi-star}),
the opportunity of transmitting the data in a fading state will be
given to only one transmission mode. Moreover, given $\phi_L=1$ for
some mode $\mathcal{M}_L$, the power allocations for other mode do
not affect the Lagrangian function. Therefore,
$\boldsymbol{\mathcal{P}}_L^*$ needs to minimize
$J_{A1}(\boldsymbol{\phi},\mathbf{q},\boldsymbol{\mathcal{P}};\boldsymbol{\lambda}^*,\boldsymbol{\zeta}_{\mathbf{H}}^*)$
under $\phi_L=1$, $\phi_j=0$ for all $j\neq L$, and $q_{L,n}=0$ for
all $L,n$. We denote
$J_{A1}(\boldsymbol{\phi},\mathbf{q},\boldsymbol{\mathcal{P}};\boldsymbol{\lambda}^*,\boldsymbol{\zeta}_{\mathbf{H}}^*)$
under this condition by
$J_{A1,L}(\boldsymbol{\mathcal{P}};\boldsymbol{\lambda}^*,\boldsymbol{\zeta}_{\mathbf{H},L}^*)$.
Then, applying Eq.~(\ref{eq-multiple-PT-derivative-R/P}), taking the
derivative of
$J_{A1,L}(\boldsymbol{\mathcal{P}};\boldsymbol{\lambda}^*,\boldsymbol{\zeta}_{\mathbf{H},L}^*)$
with respect to (w.r.t.) $\mathcal{P}_L^{(n)}$, and letting the
derivative equal to zero, we get
\begin{eqnarray}
\zeta_{\mathbf{H},L}^*-BT\lambda_n^*\theta_n  \mu_L^{(n)}
e^{-\theta_n
R^{(n)}_{\mathfrak{M}_L}\big(\mathcal{P}_L^{(n)}\big)}=0, \quad
\forall n,~L,~\mathbf{H}.\label{eq-multiple-PT-A1-power-derive-1}
\end{eqnarray}
Deriving $\mu_L^{(n)}$ and applying
Eq.~(\ref{eq-multiple-Rn-water-filling-1}), we obtain
Eqs.~(\ref{eq-multiple-PT-A1-opt-power})-(\ref{eq-multiple-PT-A1-opt-istar-and-mu}).

We further define
\begin{eqnarray}
\left\{\!\!\!\!
\begin{array}{ccl}
f_n(\boldsymbol{\phi},\mathbf{q},\boldsymbol{\mathcal{P}})&\!\!\!\!\triangleq&\!\!\!\!\mathbb{E}_{\mathbf{H}}\Big\{\sum_{L=0}^{K_{\mathrm{bs}}}\!
\Big(\phi_Le^{-\theta_n
R^{(n)}_{\mathfrak{M}_L}\!\big(\!\mathcal{P}_L^{(n)}\!\big)}
\!\!\\
&&\!\!\!\!+q_{L,n} e^{-\theta_n R^{(n)}_{\Omega_L}}\!+\sum_{j,j\neq
n}q_{L,j}\Big)-e^{-\theta_n\overline{C}_n}\!\Big\};\\
f_{\mathbf{H},L}(\boldsymbol{\mathcal{P}}_L) &\!\!\!\!\triangleq&
\!\!\!\!\sum_{n=1}^{K_{\mathrm{mu}}} \mathcal{P}_{L}^{(n)}-
\mathcal{P}_L
\end{array}
\right.\nonumber
\end{eqnarray}
which are the constraint functions on the left-hand sides of
Eqs.~(\ref{eq-multiple-PT-constraint-effcp})
and~(\ref{eq-multiple-PT-constraint-power}), respectively. The
Lagrangian duality principle~\cite{M-Bazaraa} suggests that the
optimal objective value $\overline{L}^*$ of $\boldsymbol{A1}$
satisfies:
\begin{eqnarray}
\overline{L}^*\geq
\mathfrak{J}_{A1}(\boldsymbol{\lambda}^*,\boldsymbol{\zeta}_{\mathbf{H}}^*).
\label{eq-multiple-PT-A1-duality-gap}
\end{eqnarray}
Also, $f_{\mathbf{H},L}(\boldsymbol{\mathcal{P}}_L^*)$ and
$f_n(\boldsymbol{\phi}^*,\mathbf{q}^*,\boldsymbol{\mathcal{P}}^*)$
are the subgradients~\cite{M-Bazaraa} of
$\mathfrak{J}_{A1}(\boldsymbol{\lambda},\boldsymbol{\zeta}_{\mathbf{H}})$
w.r.t. $\zeta_{\mathbf{H},L}$ and $\lambda_n$, respectively. We can
further prove that the subgradients
$f_n(\boldsymbol{\phi}^*,\mathbf{q}^*,\boldsymbol{\mathcal{P}}^*)$
and $f_{\mathbf{H},L}(\boldsymbol{\mathcal{P}}_L^*)$ of
$\mathfrak{J}_{A1}(\boldsymbol{\lambda},\boldsymbol{\zeta}_{\mathbf{H}})$
vary continuously with
$(\boldsymbol{\lambda},\boldsymbol{\zeta}_{\mathbf{H}})$. Thus,
$\mathfrak{J}_{A1}(\boldsymbol{\lambda},\boldsymbol{\zeta}_{\mathbf{H}})$
is differentiable and we have
$\partial\mathfrak{J}_{A1}(\boldsymbol{\lambda},\boldsymbol{\zeta}_{\mathbf{H}})/\partial\lambda_n=f_n(\boldsymbol{\phi}^*,\mathbf{q}^*,\boldsymbol{\mathcal{P}}^*)$
and
$\partial\mathfrak{J}_{A1}(\boldsymbol{\lambda},\boldsymbol{\zeta}_{\mathbf{H}})/\partial\zeta_{\mathbf{H},L}=f_{\mathbf{H},L}(\boldsymbol{\mathcal{P}}_L^*)g(\mathbf{H})d\mathbf{H}$,
where $g(\mathbf{H})$ is the probability density function (pdf) of
$\mathbf{H}$ and $d\mathbf{H}$ denotes the integration variable.

It is clear that if
$f_n(\boldsymbol{\phi}^*,\mathbf{q}^*,\boldsymbol{\mathcal{P}}^*)=0$
(for all $n$) and $f_{\mathbf{H},L}(\boldsymbol{\mathcal{P}}_L^*)=0$
(for all $L$ and $\mathbf{H}$) hold,
$\mathfrak{J}_{A1}(\boldsymbol{\lambda},\boldsymbol{\zeta}_{\mathbf{H}})$
attains its maximum. Since
$f_{\mathbf{H},L}(\boldsymbol{\mathcal{P}}_L^*)$ monotonically
varies with $\zeta_{\mathbf{H},L}$ as observed from
Eq.~(\ref{eq-multiple-PT-A1-opt-power})-(\ref{eq-multiple-PT-A1-opt-istar-and-mu}),
we can show that for any $\boldsymbol{\lambda}$, there exists a
$\zeta_{\mathbf{H},L}'$ resulting in
$f_{\mathbf{H},L}(\boldsymbol{\mathcal{P}}_L^*)=0$,
$L=1,2,\ldots,K_{\mathrm{bs}}$. This implies that
$\zeta_{\mathbf{H},L}^*$ must be selected such that the equality
holds in Eq.~(\ref{eq-multiple-PT-constraint-power}) under
$\boldsymbol{\lambda}^*$. Due to the concavity of
$\mathfrak{J}_{A1}(\boldsymbol{\lambda},\boldsymbol{\zeta}_{\mathbf{H}})$,
$\partial\mathfrak{J}(\boldsymbol{\lambda},\boldsymbol{\zeta}_{\mathbf{H}}')/\partial\lambda_n$
is a decreasing function of $\lambda_n$. Also, we can readily show
that
$\partial\mathfrak{J}(\boldsymbol{\lambda},\boldsymbol{\zeta}_{\mathbf{H}}')/\partial\lambda_n|_{\lambda_n=0}>0$.
Then, if there does not exist $\boldsymbol{\lambda}$ such that
$\partial\mathfrak{J}(\boldsymbol{\lambda},\boldsymbol{\zeta}_{\mathbf{H}}')/\partial\lambda_n=0$
for all $n$, we have $\lambda_n^*\rightarrow\infty$ for some $n$th
user and
$\partial\mathfrak{J}(\boldsymbol{\lambda},\boldsymbol{\zeta}_{\mathbf{H}}^*)/\partial\lambda_n>0$
always holds. For this case, we get
$\overline{L}^*\geq\mathfrak{J}(\boldsymbol{\lambda}^*,\boldsymbol{\zeta}_{\mathbf{H}}^*)
\rightarrow\infty$, implying no feasible solution for
$\boldsymbol{A1}$.

In contrast, if there exists $\boldsymbol{\lambda}^*$ such that
$\partial\mathfrak{J}_{A1}(\boldsymbol{\lambda}^*,\boldsymbol{\zeta}_{\mathbf{H}}^*)/\partial\lambda_n=0$
for all $n$, we have $\zeta_{\mathbf{H},L}^*=\zeta_{\mathbf{H},L}'$
and the obtained
$(\boldsymbol{\phi}^*,\mathbf{q}^*,\boldsymbol{\mathcal{P}}^*)$ is
feasible to $\boldsymbol{A1}$. Moreover, we get
$\overline{L}^*=\mathfrak{J}_{A1}(\boldsymbol{\lambda}^*,\boldsymbol{\zeta}_{\mathbf{H}}^*)$
with zero duality gap~\cite{M-Bazaraa} by examining
Eq.~(\ref{eq-multiple-PT-A1-Lagrangian-1}), implying that
$(\boldsymbol{\phi}^*,\mathbf{q}^*,\boldsymbol{\mathcal{P}}^*)$
given by
Eqs.~(\ref{eq-multiple-PT-A1-opt-power})-(\ref{eq-multiple-PT-A1-opt-psi-star})
under $\boldsymbol{\lambda}^*$ and
$\boldsymbol{\zeta}_{\mathbf{H}}^*$ is optimal solution of
$\boldsymbol{A1}$, and thus Theorem~\ref{theorem-PBS-BD-PT} follows.
\end{proof}

%

Note that there are no closed-form solutions for the optimal
Lagrangian multipliers $\boldsymbol{\zeta}^*_{\mathbf{H}}$ and
$\boldsymbol{\lambda}^*$. In each fading state,
$\zeta^*_{\mathbf{H},L}$ needs to be selected to satisfy
$f_{\mathbf{H},L}(\boldsymbol{\mathcal{P}}_L^*)=0$, as discussed in
the proof of Theorem~\ref{theorem-PBS-BD-PT}, which can be
conveniently determined through numerical searching method in that
$f_{\mathbf{H},L}(\boldsymbol{\mathcal{P}}_L^*)$ varies
monotonically with $\zeta_{\mathbf{H},L}$. Moreover, we can
determine $\boldsymbol{\zeta}^*_{\mathbf{H}}$ through maximizing the
Lagrangian dual function
$\mathfrak{J}(\boldsymbol{\zeta}_{\mathbf{H}},\boldsymbol{\lambda})$
by using the gradient descent algorithm. Due to the concavity of
$\mathfrak{J}(\boldsymbol{\zeta}_{\mathbf{H}},\boldsymbol{\lambda})$,
the gradient descent algorithm will converge with appropriately
selected step size. If the gradient descent algorithm does not
converge with $\lambda_n$ approaching infinity, the optimal solution
does not exist for $\boldsymbol{A1}$, as discussed in the proof of
Theorem~\ref{theorem-PBS-BD-PT}, which implies that the current
wireless resources cannot simultaneously support QoS requirements
for all of current mobile users.

\subsection{Pure PT Based BS-Selection}

We further consider the BS-selection framework based on the PT-only
approach for multiple access across mobile users. In this framework,
the system only considers the $K_{\mathrm{bs}}K_{\mathrm{mu}}$
single-user modes derived in
Section~\ref{sect-multiple-BD-single-mode} and the mode transmitting
nothing as candidates for distributed MIMO transmissions. This
PT-only based framework also uses probabilistic transmission to
determine which transmission mode is used. Then, we can formulate
the corresponding BS-usage minimization problem subject to the same
power and QoS constraints as in problem $\boldsymbol{A1}$, where
only the probability vector assigned for the
$K_{\mathrm{bs}}K_{\mathrm{mu}}+1$ candidate modes can be tuned to
minimize the average BS-usage. The detailed problem descriptions and
the corresponding optimal solution is omitted due to lack of space,
but provided on-line in~\cite{Q-Du-online}. It is clear that this
framework is easier to implement as compared to the joint BD-PT
approach, but it can only support the lower traffic load.

\section{TDMA Based BS-Selection Scheme}
\label{sect-multiple-TDMA}


We next study the TDMA based BS-selection scheme. In the TDMA based
BS-selection, we also apply the priority BS-selection algorithm
given by Fig.~\ref{fig-priority-selection} when the cardinality $L$
of $\Omega_L$ is specified. Obtaining $\Omega_L$, we further divide
each time frame into $K_{\mathrm{mu}}$ time slots for data
transmissions to $K_{\mathrm{mu}}$ users, respectively. The $n$th
user's time-slot length is set equal to $T\times t_{L,n}$ for
$n=1,2,\ldots,K_{\mathrm{mu}}$, where $t_{L,n}$ is the normalized
time-slot length. Moreover, we still use the probabilistic
transmission strategy across different $\Omega_L$ generated through
Fig.~\ref{fig-priority-selection}, where the probability of using
$\Omega_L$ to transmit data is equal to $\phi_L$. Then, we derive
the TDMA based transmission policies through
solving the following optimization problem $\boldsymbol{A2}$.

\noindent ~$\boldsymbol{A2}$: TDMA based BS-usage minimization
\begin{eqnarray}
&&\hspace{-40pt} \min_{(\boldsymbol{t},\boldsymbol{\phi})}
\left\{\overline{L} \right\}=
\min_{(\boldsymbol{t},\boldsymbol{\phi})}
\left\{\mathbb{E}_{\mathbf{H}}
\left\{\sum_{L=0}^{K_{\mathrm{bs}}}L\phi_L\right\}\right\}\nonumber\\
&&\hspace{-40pt}\mbox{s.t.: } \,1).~
\sum_{L=0}^{K_{\mathrm{bs}}}\phi_L = 1, \quad\quad
\forall\,\mathbf{H},\label{eq-multiple-TDMA-constraint-pt}\\
&&\hspace{-40pt}\quad\quad2).~ \sum_{n=1}^{K_{\mathrm{mu}}}t_{L,n}=
1, \quad\quad
\,\forall\,\mathbf{H},~L=1,2,\ldots,K_{\mathrm{bs}},\label{eq-multiple-TDMA-constraint-td}
\\
&&\hspace{-40pt}\quad\quad3).~\mathbb{E}_{\mathbf{H}}\!\left\{\sum_{L=0}^{K_{\mathrm{bs}}}\phi_Le^{-\theta_n
t_{L,n}R^{(n)}_{\Omega_L}}-e^{-\theta_n\overline{C}_n}\!\!\right\}\!\leq
0, ~\forall\, n, \label{eq-multiple-TDMA-constraint-effcp}
\end{eqnarray}
where $\boldsymbol{\phi}$ and $\boldsymbol{t}$ are functions of
$\mathbf{H}$. In particular, we have
$\boldsymbol{\phi}\triangleq(\phi_0,\phi_1,\phi_2,\ldots,\phi_{K_{\mathrm{mu}}})$,
$\boldsymbol{t}\triangleq(\boldsymbol{t}_1,\boldsymbol{t}_2,\ldots,\boldsymbol{t}_{K_{\mathrm{bs}}})$,
and
$\boldsymbol{t}_L\triangleq(t_{L,1},t_{L,2},\ldots,t_{L,K_{\mathrm{bs}}})$.


\begin{theorem}
Problem $\boldsymbol{A2}$'s optimal solution pair
$(\boldsymbol{t}^*,\boldsymbol{\phi}^*)$, if existing, is determined
by
\begin{eqnarray}
t^*_{L,n} =
\left[\frac{1}{\theta_nR^{(n)}_{\Omega_L}}\log\left(\frac{\lambda_n^*\theta_n
R^{(n)}_{\Omega_L}}{\delta_{\mathbf{H},L}^*}\right)\right]^+,
\label{eq-multiple-TDMA-opt-t}
\end{eqnarray}
for all $L$, $n$, and $\mathbf{H}$, and
\begin{eqnarray}
\phi_L^* \!=\!\left\{\!\!\!
\begin{array}{ll}
1, & \!\!\!\mbox{if} ~ L = \arg\min\limits_{\ell}\left\{\ell
+\sum_{n=1}^{K_{\mathrm{mu}}}\lambda_n^*e^{-\theta_n
t_{L,n}^*R^{(n)}_{\Omega_{\ell}}}\right\}\!;\!\!\!\\
0, & \!\!\!\mbox{otherwise},
\end{array}
\right.\label{eq-multiple-TDMA-opt-phi}
\end{eqnarray}
for all $L$ and $\mathbf{H}$, where $\delta_{\mathbf{H},L}^*$ under
given $\{\lambda_n^*\}_{n=1}^{K_{\mathrm{mu}}}$ is determined by
satisfying $\sum_{n=1}^{K_{\mathrm{mu}}}t_{L,n}^*=1$, and
$\{\lambda_n^*\}_{n=1}^{K_{\mathrm{mu}}}$ needs to be selected such
that the equality of Eq.~(\ref{eq-multiple-TDMA-constraint-effcp})
holds. \label{theorem-TDMA}
\end{theorem}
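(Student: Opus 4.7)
The plan is to mirror the Lagrangian-duality argument used for Theorem~\ref{theorem-PBS-BD-PT}, exploiting the fact that once the $\Omega_L$'s are fixed by the priority BS-selection of Fig.~\ref{fig-priority-selection}, problem $\boldsymbol{A2}$ decouples across fading states except through the effective-capacity constraints. First I would construct the Lagrangian
\begin{eqnarray}
\mathcal{J}_{A2} &=& \mathbb{E}_{\mathbf{H}}\!\left\{\sum_{L=0}^{K_{\mathrm{bs}}}L\phi_L + \sum_{L=1}^{K_{\mathrm{bs}}}\delta_{\mathbf{H},L}\!\left(\sum_{n=1}^{K_{\mathrm{mu}}}t_{L,n}-1\right)\right\}\nonumber\\
&& + \sum_{n=1}^{K_{\mathrm{mu}}}\lambda_n\mathbb{E}_{\mathbf{H}}\!\left\{\sum_{L=0}^{K_{\mathrm{bs}}}\phi_L e^{-\theta_n t_{L,n}R^{(n)}_{\Omega_L}} - e^{-\theta_n\overline{C}_n}\right\}\nonumber
\end{eqnarray}
subject to $\sum_L\phi_L=1$ and $\phi_L,t_{L,n}\geq 0$, with $\lambda_n\geq 0$ and $\delta_{\mathbf{H},L}$ free. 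As in Theorem~\ref{theorem-PBS-BD-PT}, the inner minimization interchanges with $\mathbb{E}_{\mathbf{H}}\{\cdot\}$, so for each $\mathbf{H}$ I can minimize the integrand separately.

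Next I would perform the inner minimization in two stages. Conditioning on some mode $L$ being chosen (i.e.\ $\phi_L=1$), the $t_{L,n}$'s decouple across $n$ and the stationarity condition
\[
\delta_{\mathbf{H},L} - \lambda_n\theta_nR^{(n)}_{\Omega_L}e^{-\theta_n t_{L,n}R^{(n)}_{\Omega_L}} = 0
\]
yields, together with the nonnegativity constraint $t_{L,n}\geq 0$, the expression~(\ref{eq-multiple-TDMA-opt-t}); $\delta_{\mathbf{H},L}^*$ is then uniquely pinned down by $\sum_{n}t_{L,n}^* = 1$, since the left-hand side is monotone decreasing in $\delta_{\mathbf{H},L}$ and ranges over $(0,\infty)$. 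Substituting $t_{L,n}^*$ back, the minimization over $\boldsymbol{\phi}$ reduces to
\[
\min_{\boldsymbol{\phi}}\sum_{L=0}^{K_{\mathrm{bs}}}\phi_L\!\left(L + \sum_{n=1}^{K_{\mathrm{mu}}}\lambda_n^* e^{-\theta_n t_{L,n}^*R^{(n)}_{\Omega_L}}\right)
\]
subject to $\sum_L\phi_L=1$, $\phi_L\geq 0$. Because the objective is linear, the minimum is attained by placing all mass on the index $L$ with the smallest bracketed coefficient, which is exactly Eq.~(\ref{eq-multiple-TDMA-opt-phi}); ties occur on a measure-zero set of channels since $\mathbf{H}$ has a continuous CDF.

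Finally, I would determine the outer dual variables. Given $\boldsymbol{\lambda}$, pointwise selection of $\delta_{\mathbf{H},L}^*$ already makes $\sum_n t_{L,n}^* - 1 = 0$ for every $L$ and $\mathbf{H}$, so the dual reduces to a concave function of $\boldsymbol{\lambda}$ alone. Since the subgradient in $\lambda_n$ equals the left-hand side of Eq.~(\ref{eq-multiple-TDMA-constraint-effcp}), decreases monotonically in $\lambda_n$, and is strictly positive at $\lambda_n=0$, either there exists $\lambda_n^*>0$ making the effective-capacity constraint tight or the dual grows unboundedly, in which case $\boldsymbol{A2}$ is infeasible. Complementary slackness therefore holds at $(\boldsymbol{t}^*,\boldsymbol{\phi}^*)$ whenever a feasible solution exists.

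The main obstacle will be closing the duality gap, since the effective-capacity constraint is bilinear in $(\phi_L,t_{L,n})$ through the exponential and $\boldsymbol{A2}$ is not jointly convex in $(\boldsymbol{t},\boldsymbol{\phi})$. I would handle this the same way as in the proof of Theorem~\ref{theorem-PBS-BD-PT}: observe that the continuity of the CDF of $\mathbf{H}$ makes the set of ties measure-zero so $\phi^*$ is effectively deterministic, verify that the Lagrangian evaluated at $(\boldsymbol{t}^*,\boldsymbol{\phi}^*,\boldsymbol{\lambda}^*,\boldsymbol{\delta}_{\mathbf{H}}^*)$ coincides with the primal objective $\overline{L}^*$ (the power-type penalty term vanishes by complementary slackness, just as in Theorem~\ref{theorem-PBS-BD-PT}), and conclude zero duality gap. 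This certifies that the candidate pair in Eqs.~(\ref{eq-multiple-TDMA-opt-t})--(\ref{eq-multiple-TDMA-opt-phi}) is globally optimal whenever feasibility holds.
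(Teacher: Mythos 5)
Your proposal is correct and follows exactly the route the paper intends: the paper omits the proof of Theorem~\ref{theorem-TDMA} (deferring to the online report) precisely because it is the same Lagrangian-duality template as Theorem~\ref{theorem-PBS-BD-PT} — dualize the per-state time-sharing constraint with $\delta_{\mathbf{H},L}$ and the effective-capacity constraints with constant $\lambda_n$, do the per-state inner minimization in two stages (stationarity in $t_{L,n}$ giving Eq.~(\ref{eq-multiple-TDMA-opt-t}), then a linear program in $\boldsymbol{\phi}$ giving Eq.~(\ref{eq-multiple-TDMA-opt-phi})), and close the duality gap by direct evaluation of the Lagrangian at the feasible candidate. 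Your handling of the feasibility/infeasibility dichotomy via the monotone subgradient in $\lambda_n$ likewise matches the paper's argument for $\boldsymbol{A1}$.
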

\begin{proof}
The detailed proof of Theorem~\ref{theorem-TDMA} is omitted due to
lack of space, but is provided on-line in~\cite{Q-Du-online}.
\end{proof}

\section{Simulation Evaluations}
\label{sect-simulations}

\begin{figure*}[t]
\vspace{-6pt}\centerline{\hspace{4pt}\includegraphics[width=2.65in]{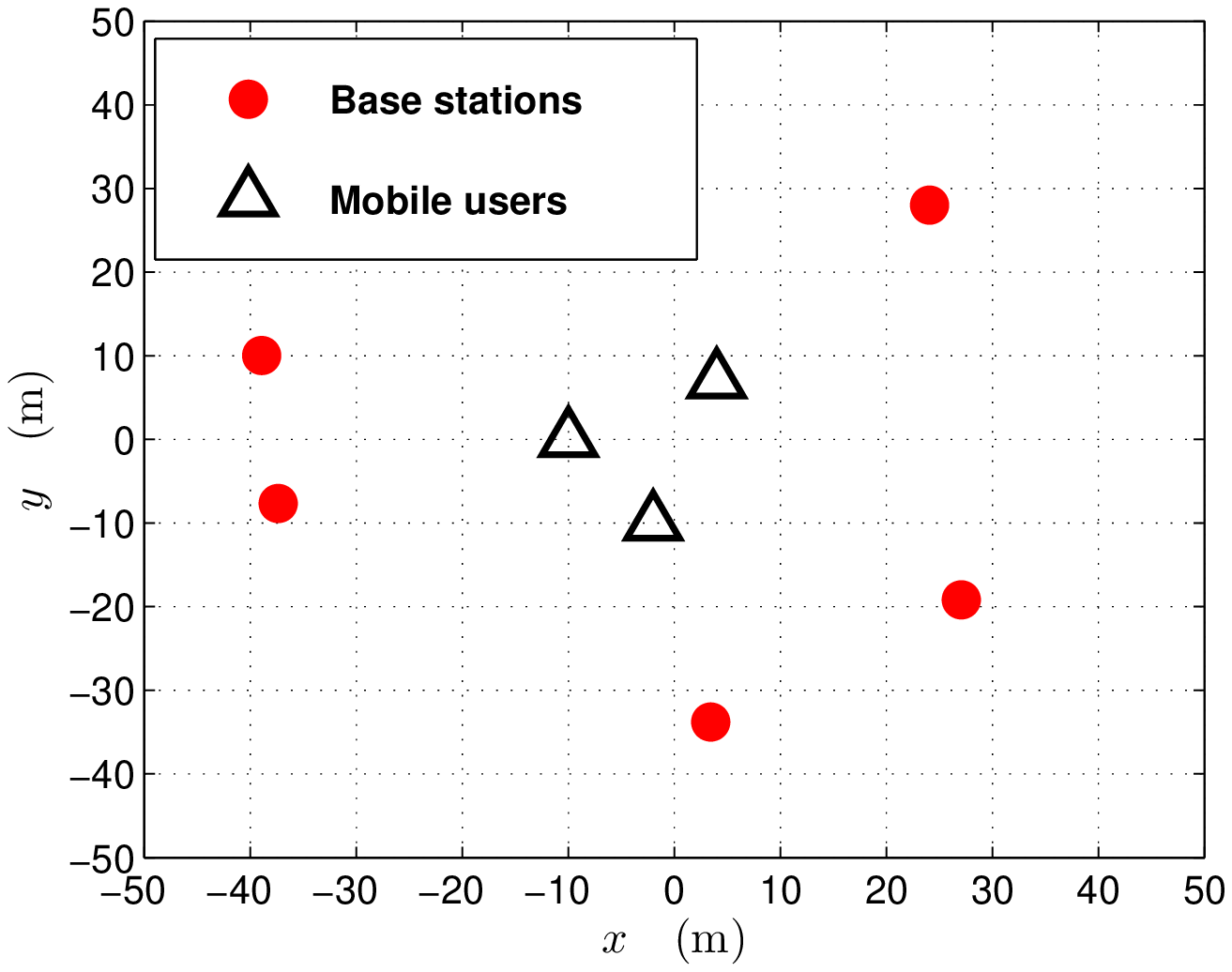}\hspace{-15pt}
\includegraphics[width=2.65in]{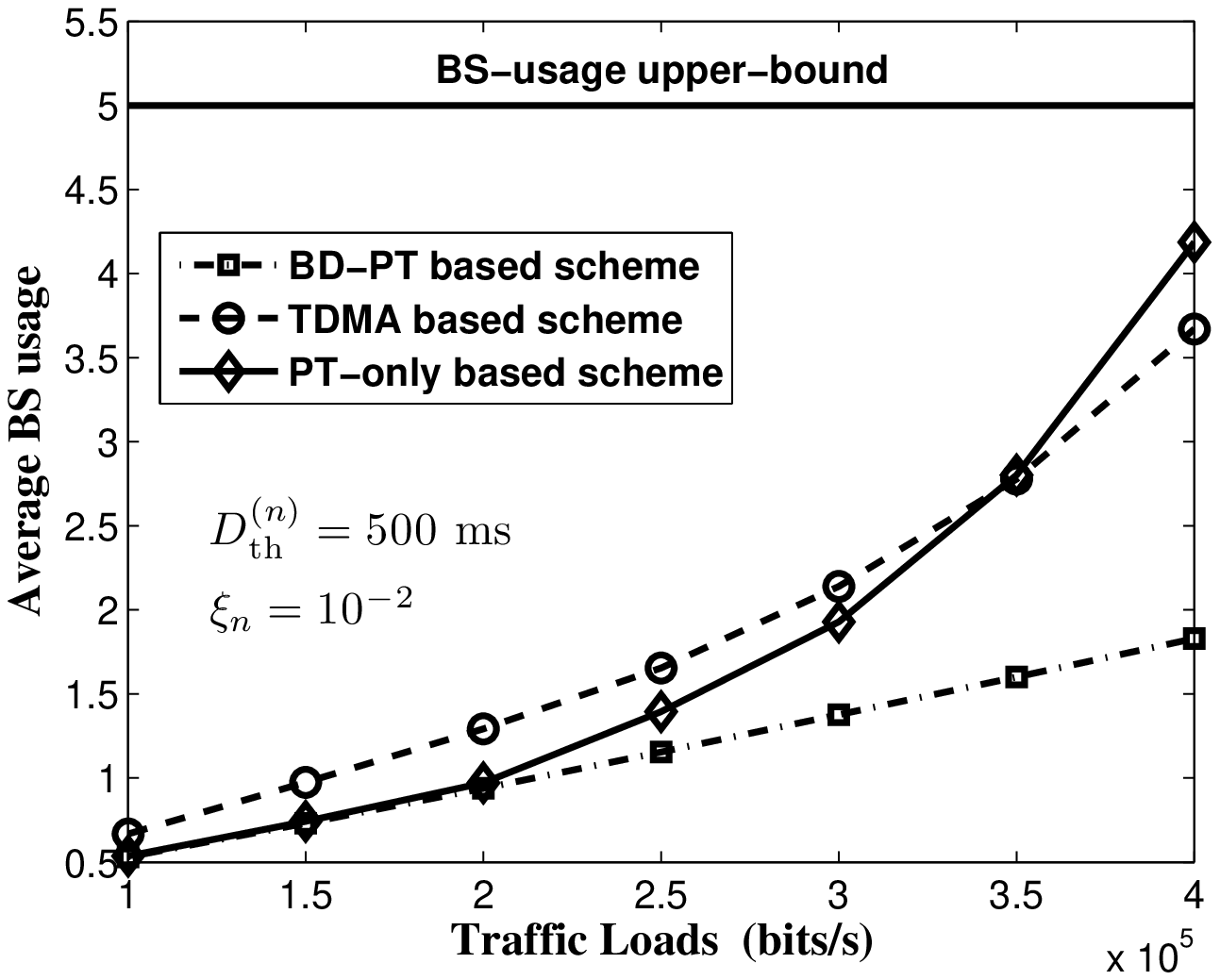}\hspace{-15pt}\includegraphics[width=2.65in]{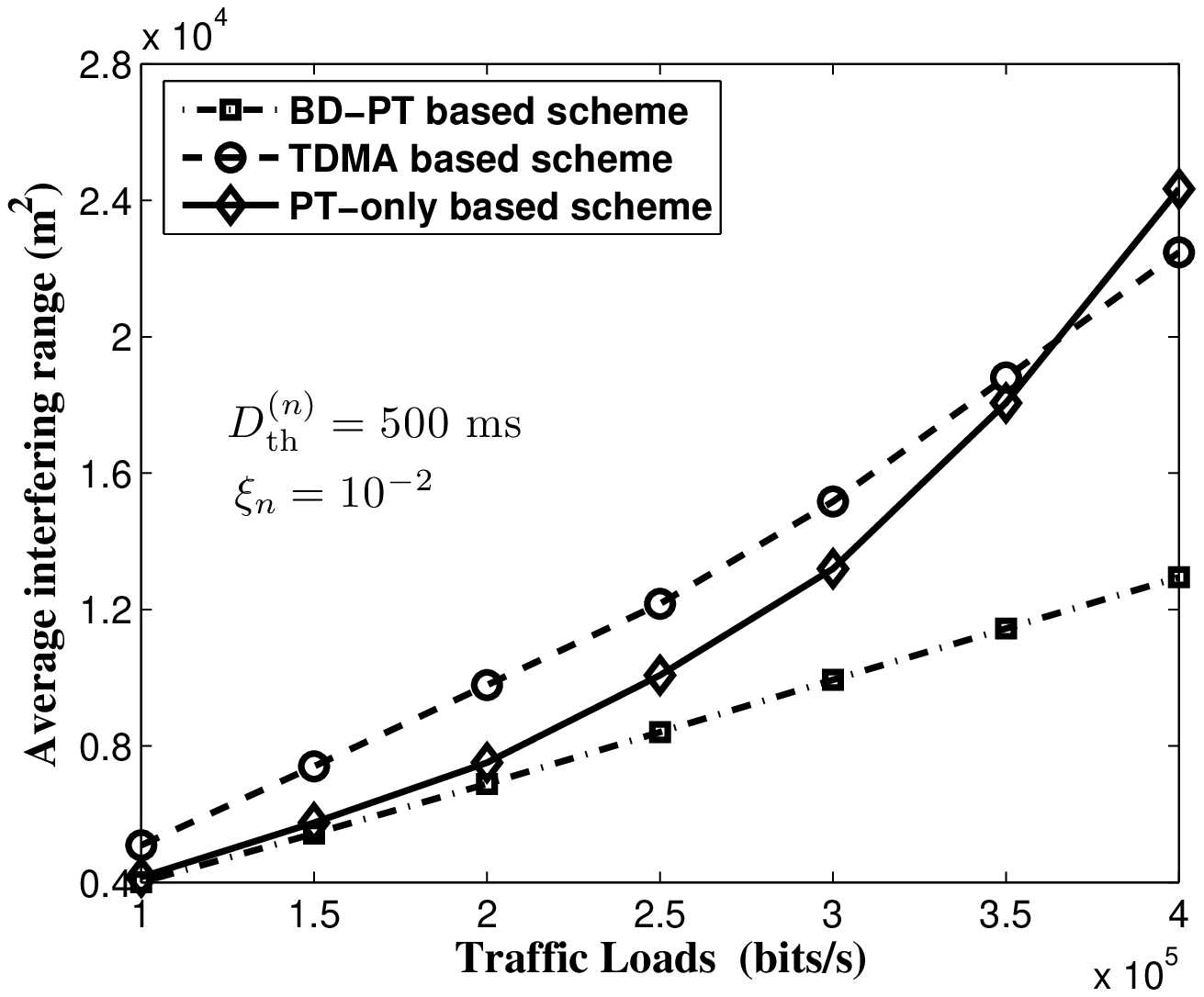}}\vspace{-3pt}
\centerline{~~(a)\hspace{170pt}(b)\hspace{170pt}(c)}\vspace{-7pt}\caption{(a)
The deployment of BS's and the positions of mobile users, where
$K_{\mathrm{mu}}=3$ and $K_{\mathrm{bs}}=5$. (b) Simulation results
of the average BS usage $\overline{L}$ versus traffic load under the
specified delay-QoS requirements, where $\xi_n=10^{-2}$ and
$D_{\mathrm{th}}^{(n)}=500$ ms for all $n$; $M_m=3$; $\kappa=1$. (b)
Simulation results of the average interfering range versus traffic
load under the same system setup as in~(b).}
\vspace{-5pt}\label{fig-group-1}
\end{figure*}
\begin{figure*}[t]
\vspace{-6pt}\centerline{\hspace{4pt}\includegraphics[width=2.65in]{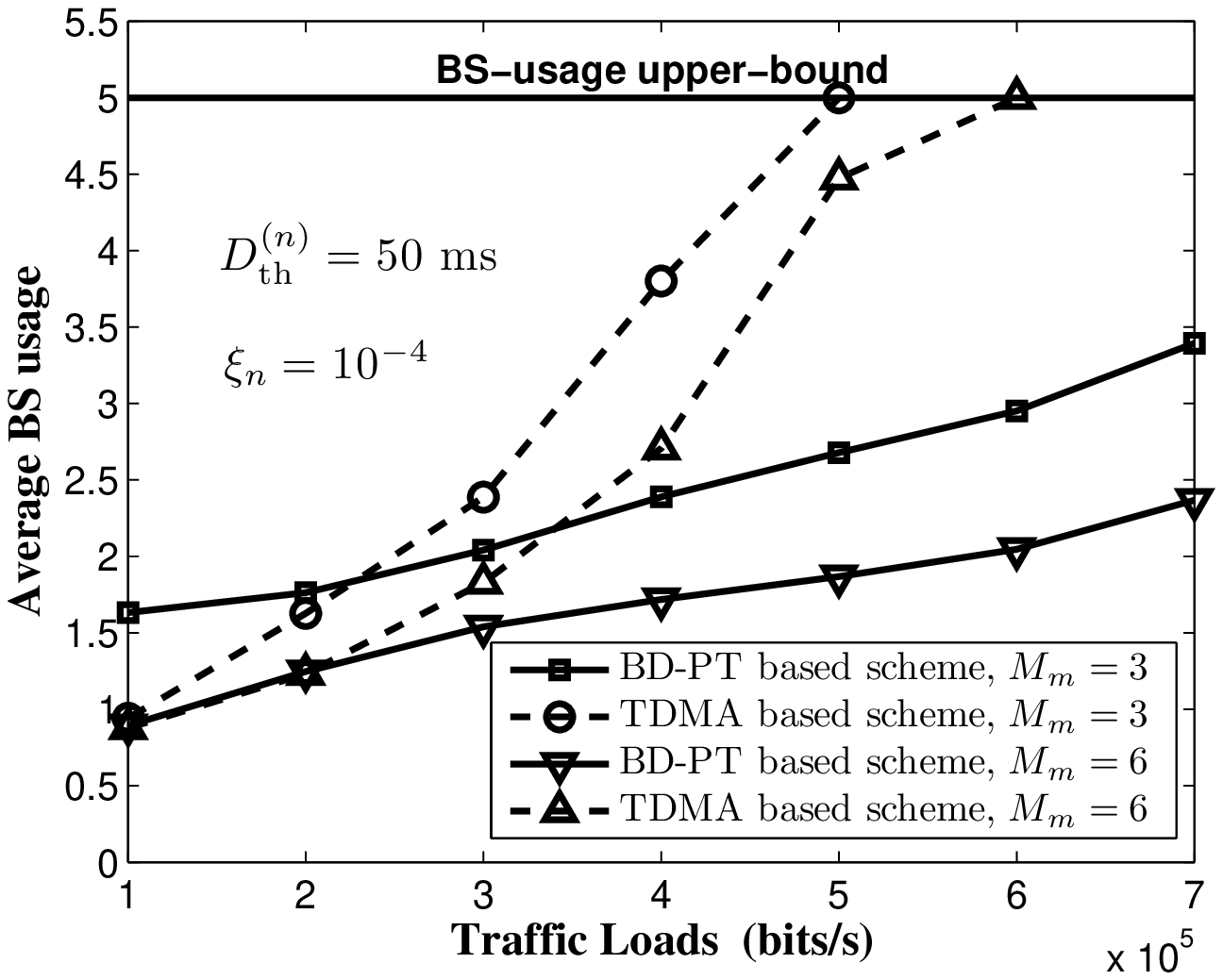}\hspace{-15pt}
\includegraphics[width=2.65in]{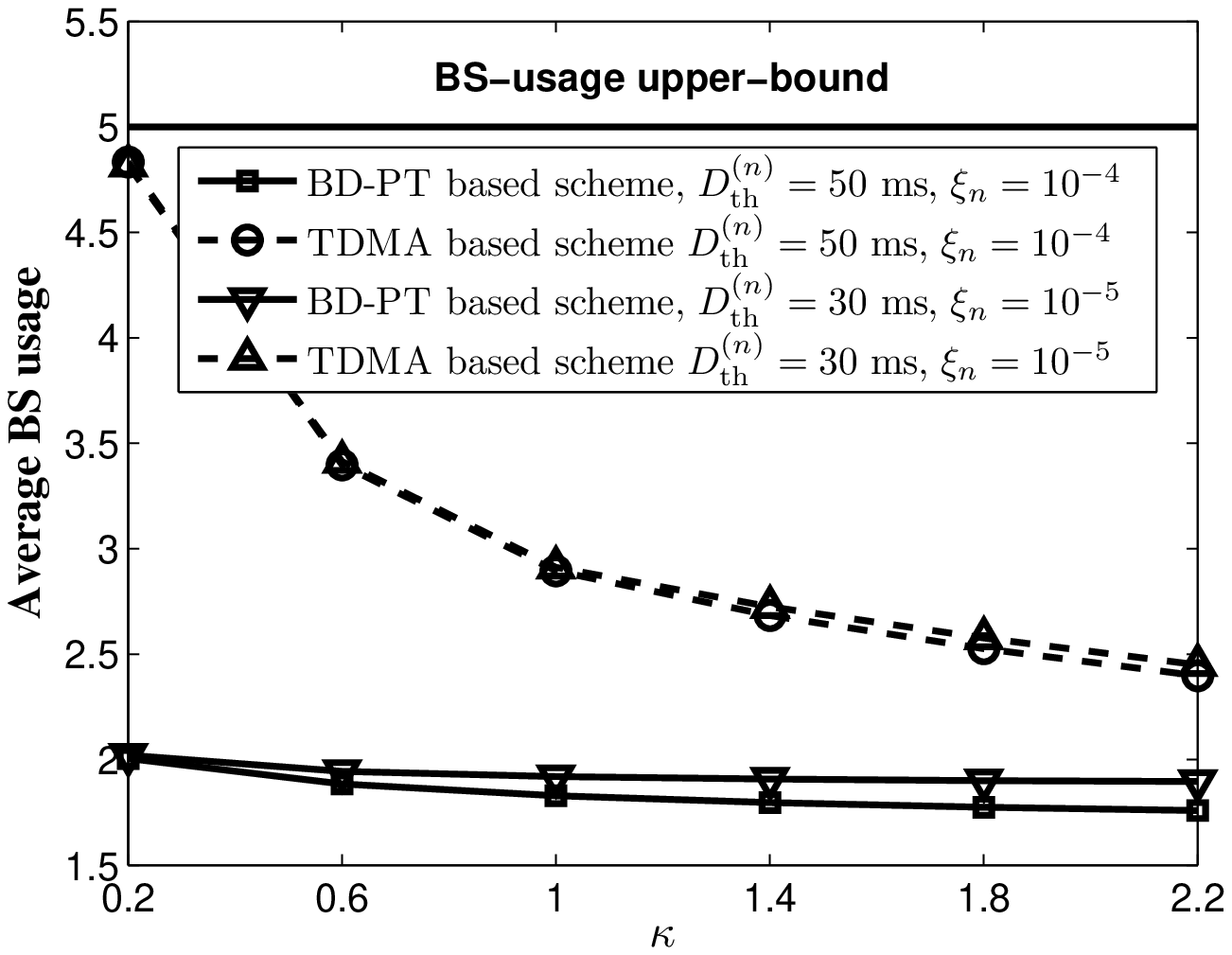}\hspace{-15pt}\includegraphics[width=2.65in]{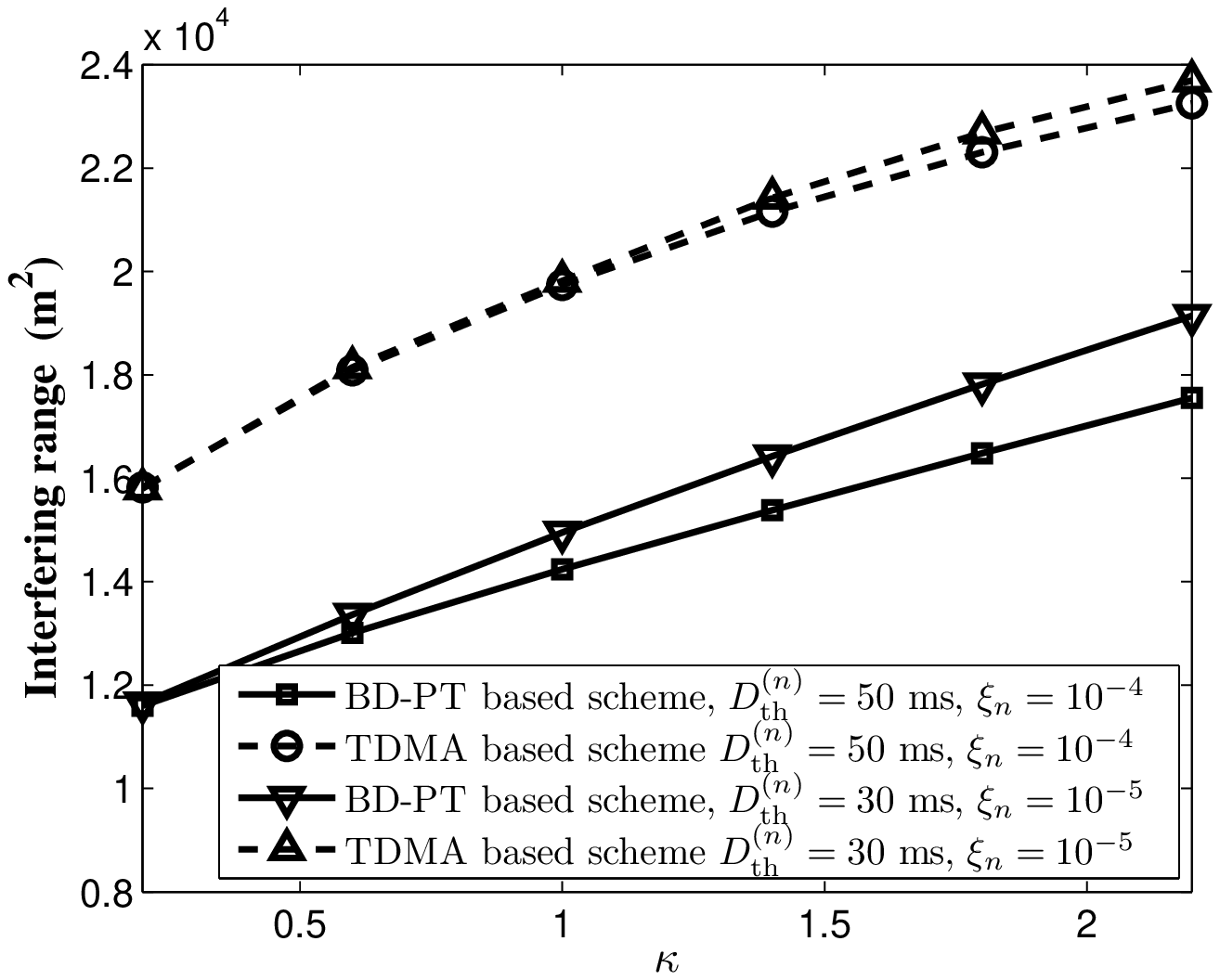}}\vspace{-3pt}
\centerline{~~(a)\hspace{170pt}(b)\hspace{170pt}(c)}\vspace{-7pt}\caption{(a)
The average BS usage $\overline{L}$ versus traffic load under the
specified delay-QoS requirements, where $D_{\mathrm{th}}^{(n)}=50$
ms, $\xi_n=10^{-4}$, and $N_n=2$ for all $n$; $\kappa=1$. (b)
Average BS usage versus $\kappa$, where $M_m=5$. (c) Average
interfering range versus $\kappa$, where the system setup is the
same as in~(b).} \vspace{-8pt}\label{fig-group-2}
\end{figure*}



We use simulations to evaluate the performances of our proposed
QoS-aware BS selection schemes for distributed MIMO links. The BS's
deployment and the mobile users' positions are shown in
Fig.~\ref{fig-group-1}(a), where $K_{\mathrm{bs}}=5$ and
$K_{\mathrm{mu}}=3$. We set $T=10$ ms and $B=10^5$ Hz. We further
assume that all users have the same number of receive antennas, all
distributed BS's have the same number of transmit antennas, and the
incoming traffic loads for all users are equal. Furthermore, we
employ the following average power propagation model. Specifically,
the average received power gain $\overline{h}_{n,m}$ is equal to
$G/d_{n,m}^{\eta}$, where $d_{n,m}$ is the distance between the
$n$th mobile user and the $m$th BS, $G$ is a constant factor, and
$\eta$ is the path loss exponent typically varying from 2 to
6~\cite{T-Rappaport}. Without loss of generality, we let
$\mathcal{P}_{\mathrm{ref}}=1$ and select $G$ such that
$\overline{h}_{n,m}=0$~dB at $d_{n,m}=50$~m. Also, we set
$\sigma_{\mathrm{th}}^2=0$ dB for evaluating of the average
interfering range (see
Section~\ref{sect-sysmodel-design-objective}).

Figures~\ref{fig-group-1}(b) and~\ref{fig-group-1}(c) compare the
average BS usage and interfering range as functions of the incoming
traffic load among our derived QoS-aware BS-selection schemes,
including the joint BD-PT, TDMA, and PT-only based schemes.
Fig.~\ref{fig-group-1}(b) shows that as the traffic load increases,
all scheme's average BS usages become larger to satisfy the more
stringent QoS requirements. However, the TDMA and PT-only based
schemes' BS usages increase much more rapidly than our proposed
BD-PT based scheme. This is because block diagonalization for
multi-user MIMO communications can effectively take advantage of
space multiplexing in removing the cross-interferences among all
mobile users, and thus can achieve high spectral efficiency and
system throughput. We can further observe that when the traffic load
gets lower (larger), the PT-only based scheme needs less (more) BS's
to satisfy the specified QoS requirements, as compared to the TDMA
based scheme. Fig.~\ref{fig-group-1}(c) plots the average
interfering range caused by distributed MIMO transmissions, which
displays the similar results to Fig.~\ref{fig-group-1}(b). This is
expected because the total used power in each fading state linearly
increases with the cardinality $L$ of selected BS-subset, as shown
in Section~\ref{sect-sysmodel-power}.

Figure~\ref{fig-group-2}(a) plots the average BS usage against
traffic load with more stringent QoS constraints than the
constraints used in Fig.~\ref{fig-group-1}(b). Under these more
stringent constraints, the PT-only based scheme cannot support the
specified QoS requirements for the incoming traffics and thus are
not plotted in Fig.~\ref{fig-group-2}(a), which implies that the
PT-only based scheme only works efficiently with loose QoS
constraints.  We can observe from Fig.~\ref{fig-group-2}(a) that the
BD-PT based scheme generally outperforms the TDMA based scheme in
terms of requiring fewer BS's, especially when the traffic load is
high. As shown in Fig.~\ref{fig-group-2}(a), for traffic load higher
than or equal to 500 Kbits/s, the BS usage of the TDMA based scheme
will reach the upper-bound, which is equal to $K_{\mathrm{bs}}$.
This implies that all wireless resources have been used up while the
specified QoS requirements for the incoming traffic still cannot be
satisfied. In contrast, the BD-PT based scheme can clearly support
even higher traffic load. An interesting observation is that the
TDMA based scheme performs slightly better than the BD-PT based
scheme, when the traffic load is low and the number of antennas per
BS is small. This is because the advantage of BD technique can be
effectively used when the spatial-multiplexing degree order is high.
However, clearly the small number of transmit antennas can already
successfully support smaller traffic load through TDMA strategy,
while the BD in this case is not very effective due to the limited
number of transmit antennas, implying insufficient freedom for
spatial multiplexing.

Figures~\ref{fig-group-2}(b) and~\ref{fig-group-2}(c) depict the
average BS usage and interfering range, respectively, versus the
parameter $\kappa$, where $\kappa$ is defined in
Section~\ref{sect-sysmodel-power}, which is the power increasing
rate with the number of BS's selected for distributed MIMO
transmissions. We can see that the average BS usage and the
interfering range of the BD-PT based scheme are much smaller than
those of the TDMA based scheme. As shown in
Figs.~\ref{fig-group-2}(b) and~\ref{fig-group-2}(c), the lower delay
bound and the smaller violation probability threshold, implying more
stringent delay-QoS requirements, cause more BS usage and thus
larger interfering range. This is because in order to satisfy more
stringent QoS requirements, more BS's need to get involved with the
cooperative downlink transmissions to achieve the high system
throughput for all mobile users. This also demonstrates that our
proposed schemes can effectively adjust the transmission strategy to
adapt to the specified QoS requirements. In addition, the average
BS-usage is a decreasing function of $\kappa$ but the interfering
range is an increasing function. This suggests that we can use more
power to tradeoff the lower implementation complexity in distributed
MIMO transmissions.

\section{Conclusions}
\label{sect-conclusion}

We proposed the QoS-aware BS-selection schemes for the distributed
wireless MIMO links, which aim at minimizing the BS usages and
reducing the interfering range, while satisfying diverse statistical
delay-QoS constraints over multiple mobile users. In particular, we
developed the joint block-diagonalization and
probabilistic-transmission based scheme, the TDMA based scheme, and
the pure probabilistic-transmission based scheme, respectively, to
implement efficient BS-selection and the corresponding resource
allocation algorithms for QoS provisioning of mobile users.
Simulation results show that the joint block-diagonalization and
probabilistic-transmission based scheme generally outperforms the
TDMA based and pure probabilistic-transmission based schemes in
terms of requiring less BS's for data transmissions and decreasing
the interfering range caused to the entire wireless networks.
Moreover, the TDMA and probabilistic-transmission based schemes is
efficient when the traffic load is not heavy.

%
%


\begin{thebibliography}{99}


{
\bibitem{A-Sanderovich} A. Sanderovich, S. Shamai (Shitz), and Y.
Steinberg, ``Distributed MIMO receiver--Achievable rates and upper
bounds," \emph{IEEE Trans. Inf. Theory}, vol. 55, no. 10, pp.
4419-4438, Oct. 2009.


%

\bibitem{R-Mudumbai} R-Mudumbai, D.R. Brown III, U. Madhow, and H.V. Poor,
``Distributed transmit beamforming: Challenges and recent progress,"
\emph{IEEE Commun. Mag.}, vol. 47, no. 2, pp. 102-110, Feb 2009.

%



\bibitem{P-Shang} P. Shang, G. Zhu, L. Tan, G. Su, and T. Li, ``Transmit antenna
selection for the distributed MIMO systems," \emph{International
Conference on Networks Security, Wireless Communications and Trusted
Computing}, 2009, pp. 449-453.


\bibitem{E-Telatar} E. Telatar, ``Capacity of multi-antenna Gaussian
channels," \emph{European Trans. Telecomm.}, vol. 10, no. 6, pp.
585-596, Nov. 1999.

\bibitem{S-Sanayei} S. Sanayei and A. Nosratinia,  ``Antenna selection in MIMO
systems," \emph{IEEE Commun. Mag.}, no. 10, pp. 68-73, October 2004.



\bibitem{M-Gharavi-Alkhansari} M. Gharavi-Alkhansari, A. B. Gershman, ``Fast antenna subset selection in MIMO
systems," \emph{IEEE Trans. Signal Processing}, vol. 52, no. 2, pp.
339-347, Feb. 2004.


\bibitem{C-S-Chang-book} C.-S. Chang, \emph{Performance Guarantees in Communication Networks}, Springer-Verlag London, 2000.



\bibitem{D-Wu} D. Wu and R. Negi, ``Effective capacity: A wireless link
model for support of quality of service," \emph{IEEE Trans. on
Wireless Commun.}, vol.~2, no.~4, July 2003, pp. 630-643.



%

\bibitem{X-Zhang} X. Zhang, J. Tang, H.-H. Chen, S. Ci, and M. Guizni, ``Cross-layer-based modeling for quality of service guarantees in mobile wireless networks," \emph{IEEE Commun. Mag.}, pp. 100-106, Jan. 2006.



\bibitem{M-Bazaraa} M. S. Bazaraa, H. D. Sherali, and C. M. Shetty,
\emph{Nonlinear Programming: Theory and Algorithms}, 3rd ed., John
Wiley \& Sons, Inc., 2006.

\bibitem{J-Tang} J. Tang and X. Zhang, ``Quality-of-service driven power and rate
adaptation over wireless links," \emph{IEEE Trans. Wireless
Commun.}, vol.~6, no.~8, pp.~3058-3068, Aug. 2007.




\bibitem{T-Rappaport} T. S. Rappaport, \emph{Wireless Communications: Principles \&
Practice}, Prentice Hall, 1996.



\bibitem{H-Weingarten} H. Weingarten, Y. Steinberg£¬and S. Shamai ``The capacity region of the Gaussian multiple-input multiple-output
broadcast channel," \emph{IEEE Trans. Inf. Thoery}, vol, 52, no. 9,
pp. 3936-3964, Sep. 2006.

\bibitem{Q-H-Spencer} Q. H. Spencer, A. L. Swindlehurst, and M. Haardt, "Zero-Forcing methods for downlink spatial multiplexing in multisuer MIMO
channels," \emph{IEEE Trans. Signal Processing}, vol. 52, no. 2,
Feb. 2004.
%

\bibitem{T-Yoo} T. Yoo, A. Goldsmith, ``On the Optimality of
Multiantenna Broadcast Scheduling Using Zero-Forcing Beamforming,"
\emph{IEEE J. Sel. Area Commun.}, vol. 24, no. 3, Mar. 2006, pp.
528-541.

\bibitem{S-Kaviani} S. Kaviani and W. A. Krzymien, ``User Selection
for Multiple-Antenna Broadcast Channel with Zero-Forcing
Beamforming," in \emph{Proc. IEEE GLOBECOM 2008}, New Orleans, USA,
Dec. 2008.


\bibitem{J-Tang-twc-jun-2008} J. Tang and X. Zhang, ``Cross-Layer-Model Based Adaptive Resource
Allocation for Statistical QoS Guarantees in Mobile Wireless
Networks," \emph{IEEE Transactions on Wireless Communications}, vol.
7, no. 6, pp. 2318--2328, June 2008.


\bibitem{Q-Du-online} Q. Du and X. Zhang,
``Base-Station Selections for QoS Provisioning Over Distributed
Multi-User MIMO Links in Wireless Networks,'' \emph{Networking and
Information Systems Labs., Dept. Electr. and Comput. Eng., Texas
A\&M Univ., College Station, Tech. Rep.} [Online.] Available:
http://www.ece.tamu.edu/$\sim$xizhang/papers/qos\_bs\_selection.pdf.


}
\end{thebibliography}
\end{document}